\title{Arbitrage Opportunities in Misspecified Stochastic Volatility Models}
\author{Rudra P. Jena\footnote{Centre de Mathématiques Appliquées, Ecole Polytechnique, 91128 Palaiseau France. E-mail: {\tt \{jena,tankov\}@cmap.polytechnique.fr}}
 \and Peter Tankov\footnotemark[1]\ \footnote{Corresponding author}}
\date{}
\begin{document}
\newtheorem{theo}{Theorem}[section]
\newtheorem{lemma}[theo]{Lemma}
\newtheorem{propo}[theo]{Proposition}
\newtheorem{coro}[theo]{Corollary}

\theoremstyle{remark}

\newtheorem{defi}[theo]{Definition}
\newtheorem{notation}[theo]{Notation}
\newtheorem{remark}[theo]{Remark}
\newtheorem{example}[theo]{Example}
\newtheorem{cond}[theo]{Condition}
\newtheorem{assu}[theo]{Assumption}
\newtheorem{question}[theo]{Question}
\newtheorem{conclusion}[theo]{Conclusion}


\voffset -0.7 true cm \hoffset 1.1 true cm \topmargin 0.0in
\evensidemargin 0.0in \oddsidemargin 0.0in \textheight 8.6in
\textwidth 7.25in \parskip 10 pt


\maketitle

  \begin{abstract}
There is vast empirical evidence that given a set of assumptions on the real-world dynamics of an asset, the European options on this asset are not efficiently priced in options markets, giving rise to arbitrage opportunities. We study these opportunities in a generic stochastic volatility model and exhibit the strategies which maximize the arbitrage profit. In the case when the misspecified dynamics is a classical Black-Scholes one, we give a new interpretation of the butterfly and risk reversal contracts in terms of their performance for volatility arbitrage. Our results are illustrated by a numerical example including transaction costs. 
\end{abstract}

\noindent Key words: stochastic volatility, model misspecification, volatility arbitrage, butterfly, risk reversal, SABR model

\noindent 2010 Mathematical Subject Classification: 91G20, 60J60


\section{Introduction}
It has been observed by several authors \cite{awy01,bakshi,dumas} that given a set of assumptions on the real-world dynamics of the underlying, the European options on this underlying are not efficiently priced in options markets. Important discrepancies between the implied volatility and historical volatility levels, as illustrated in Figure \ref{revol}, as well as substantial differences between historical and option-based measures of skewness and kurtosis \cite{bakshi} have been documented. These discrepancies could be explained by systematic mispricings / model misspecification in option markets, leading to potential arbitrage opportunities\footnote{There exist many alternative explanations for why the implied volatilities are consistently higher than historical volatilities, such as, price discontinuity \cite{bates}, market crash fears \cite{bates00} and liquidity effects such as transaction costs \cite{leland.85,kabanov.safarian.97,cetin.al.06,cetin.soner.touzi.10}, inability to trade in continuous time \cite{bertsimas.kogan.lo.00} and market microstructure effects \cite{rr3}. The literature is too vast to cite even the principal contributions here}.
 The aim of this paper is to quantify these opportunities within a generic stochastic volatility framework, and to construct the strategies maximizing the gain. The arbitrage opportunities analyzed in this paper can be called statistical arbitrage opportunities, because their presence / absence depends on the statistical model for the dynamics of the underlying asset. Contrary to model independent arbitrages, such as violation of the call-put parity, a statistical arbitrage only exists in relation to the particular pricing model.

\begin{figure}
\centerline{\includegraphics[width=0.6\textwidth]{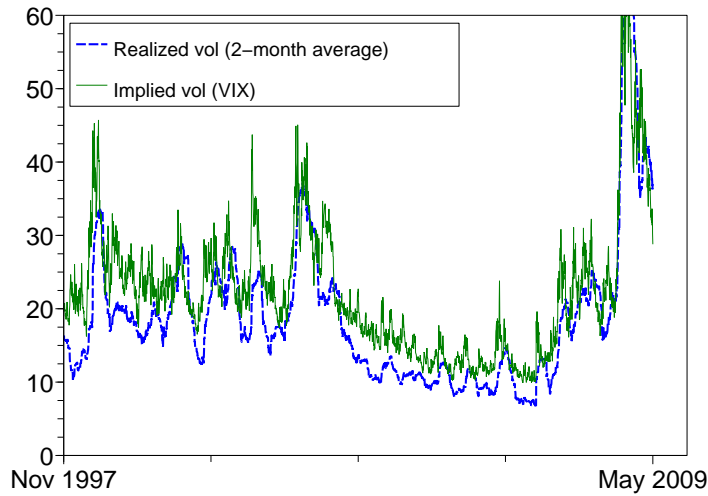}}
\caption{Historical evolution of the VIX index (implied volatility of options on the S\&P 500 index with 1 to 2 month to expiry, averaged over strikes, see \cite{vixwhitepaper}) compared to the historical volatility of the S\&P 500 index. The fact that the implied volatility is consistently above the historical one by several percentage points suggests a possibility of mispricing.
}
\label{revol}
\end{figure}

The issue of quantifying the gain/loss from trading with a misspecified model has so far mainly been studied in the case of the Black-Scholes model with misspecified volatility \cite{robustbs,romagnoli.vargiolu.00}. In this paper we go one step further, and analyze the effects of misspecification of the volatility itself, the volatility of volatility and of the correlation between the underlying asset and the volatility in a stochastic volatility model. Since these parameters may be observed from a single trajectory of the underlying in an almost sure way, their misspecification leads, in principle, to an arbitrage opportunity. The questions are whether this opportunity can be realized with a feasible strategy, and how to construct a strategy maximizing the arbitrage gain under suitable conditions guaranteeing the well-posedness of the optimization problem. 

While the issue of consistency between real-world and risk-neutral probability measures has been given a rigorous treatment in several papers \cite{awy01,bakshi,dumas}, the corresponding arbitrage trading strategies are usually constructed in an ad-hoc manner \cite{awy01,aqf02,javaheri.05}. For instance, when the risk-neutral skewness is greater than the historical one (which roughly corresponds to correlation misspecification in a stochastic volatility model), \cite{awy01} suggest a strategy consisting in buying all OTM puts and selling all OTM calls. Similarly, if the risk-neutral kurtosis is greater than the historical one, the trading strategy consists in selling far OTM and ATM options while simultaneously buying near OTM options.  In this paper we determine exactly which options must be bought and sold to maximize arbitrage gains, depending on model parameters. 

Our second objective is to analyze commonly used option trading strategies, such as butterflies and risk reversals, and provide a new interpretation of these structures in terms of their performance for volatility arbitrage. A butterfly (BF) is a common strategy in FX option trading, which consists in buying an out of the money call and an out of the money put with the same delta value (in absolute terms) and selling a certain number of at the money calls/puts. A risk reversal (RR) is a strategy consisting in buying an out of the money call and selling an out of the money put with the same delta value (in absolute terms). The financial engineering folklore goes that ``butterflies can be used to arbitrage misspecified volatility of volatility'' and ``risk reversals can be used to arbitrage misspecified correlation''.  In section \ref{bs.sec}, we study these strategies and discuss their optimality for volatility trading.

During the last decade we have witnessed the appearence of a large spectrum of new products specifically designed for volatility and correlation trading, such as variance, volatility and correlation swaps. However, in most markets, European options continue to be much more liquid than exotic volatility products and still constitute the most commonly used tool for volatility arbitrage. In this paper we therefore concentrate on arbitrage strategies involving only the underlying asset and liquid European options. 

The rest of the paper is structured as follows. In Section \ref{setting.sec}, we introduce the generic misspecified stochastic volatility model. Section \ref{portfolio.sec} defines the admissible trading strategies and establishes the form of the optimal arbitrage portfolio. Section \ref{bs.sec} is dedicated to the special case when the misspecified model is the constant volatility Black-Scholes model. This allows us to give a new interpretation of butterfly spreads and risk reversals in terms of their suitability for volatility arbitrage. Section \ref{sabr.sec} presents a simulation study of the performance of the optimal arbitrage strategies in the framework of the SABR stochastic volatility model \cite{sabr}. The purpose of this last section is not to prove the efficiency of our strategies in real markets but simply to provide an illustration using simulated data. A comprehensive empirical study using real market prices is left for further research.

\section{A misspecified stochastic volatility framework}
\label{setting.sec}

We start with a filtered probabiblity space $(\Omega,\mathcal F,\mathbb P, (\mathcal F_t)_{t\geq 0})$ and consider a financial market where there is a risky asset $S$, a risk-free asset and a certain number of European options on $S$. We assume that the interest rate is zero, and that the price of the risky asset $S$ satisfies the stochastic differential equation
\begin{align}
dS_{t}/S_{t}&=\mu_{t}dt+\sigma_t\sqrt{1-\rho^2_t} dW_{t}^{1} + \sigma_t\rho_t dW_{t}^{2} \label{eq:basicsde1}
\end{align}
where $\mu$, $\sigma$ and $\rho \in [-1,1]$ are adapted processes such that 
$$
\int_0^t (1+S_s^2)(1+\mu_s^2 + \sigma_s^2)ds <\infty \quad \text{a.s.}\quad \text{for all $t$,}
$$
and $(W^1,W^2)$ is a standard $2$-dimensional Brownian motion. This integrability condition implies in particular that the stock price process never hits zero $\mathbb P$-a.s.

To account for a possible misspecification of the instantaneous volatility, we introduce the process $\tilde \sigma_t$, which represents the instantaneous volatility  used by the option's market for all pricing purposes. In particular, it is the implied volatility of very short-term at the money options, and in the sequel we call $\tilde \sigma$ the \emph{instantaneous implied volatility} process.  We assume that $\tilde \sigma_t = \tilde \sigma(Y_t)$, where $Y$ is a stochastic process with dynamics
\begin{align}
dY_t&=a_{t}dt+b_{t}dW_{t}^{2},\label{eq:basicsde2}
\end{align}
where $a_t$ and $b_t>0$ are adapted processes such that 
$$
\int_0^t (a_s^2 + b_s^2)ds <\infty \quad \text{a.s.}\quad \text{for all $t$,}
$$ 
and $\tilde\sigma:\mathbb R\to (0,\infty)$ is a continuously differentiable Lipschitz function with $0<\underline\sigma \leq  \tilde\sigma(y)\leq \overline \sigma <\infty$ and $\tilde\sigma'(y)>0$ for all $y\in \mathbb R$;   

Further, to account for possible misspecification of the volatility of volatility $b$ and of the correlation $\rho$, we assume that there exists another probability measure $\mathbb Q$, called market or pricing probability, not necessarily equivalent to $\mathbb P$, such that all options on $S$ are priced in the market as if they were martingales under $\mathbb Q$. The measure $\mathbb Q$ corresponds to the pricing rule used by the market participants, which may be inconsistent with the real-world dynamics of the underlying asset (meaning that $\mathbb Q$ is not necessarily absolutely continuous with respect to $\mathbb P$). Under $\mathbb Q$, the underlying asset and its volatility form a 2-dimensional Markovian diffusion:
\begin{align}
dS_{t}/S_{t}&=\tilde\sigma(Y_t)\sqrt{1-\tilde\rho^2(Y_t,t)} d\tilde W_{t}^{1} + \tilde\sigma(Y_t)\tilde\rho(Y_t,t) d\tilde W_{t}^{2}\label{eq:qsde1}\\
dY_t &=\tilde a(Y_t,t)dt+\tilde b(Y_t,t) d\tilde W_{t}^{2},\label{eq:qsde2}
\end{align}
where the coefficients $\tilde a$, $\tilde b$ and $\tilde \rho$ are deterministic functions and $(\tilde W^1,\tilde W^2)$ is a standard 2-dimensional Brownian motion under $\mathbb Q$. Since $\tilde\sigma$ is bounded, the stock price process never hits zero $\mathbb Q$-a.s.

The following assumptions on the coefficients of \eqref{eq:qsde1}--\eqref{eq:qsde2} will be used throughout the paper: 
\begin{itemize}
\item[i)] There exists $\varepsilon>0$ such that $\min(1-\tilde\rho(y,t)^2,\tilde b(y,t))\geq \varepsilon$ for all $(y,t)\in \mathbb R\times [0,T]$. 
\item[ii)] The functions $\tilde a(y,t)$, $\tilde b(y,t)$, $\tilde \rho(y,t)$ are twice differentiable with respect to $y$; these coefficients as well as their first and second derivatives with respect to $y$ are bounded and Hölder continuous in $y,t$.
\item[iii)] The function $\tilde \sigma$ is twice differentiable; this function as well as its first and second derivative is bounded and Hölder continuous.
\end{itemize}

We suppose that a continuum of European options (indifferently calls or puts) for all strikes and at least one maturity is quoted in the market. The price of an option with maturity date $T$ and pay-off $H(S_T)$ can be expressed as a deterministic function of $S_t$, $Y_t$ and $t$:
$$
P(S_t,Y_t,t) = E^Q[H(S_T)|\mathcal F_t]. 
$$ 
Using standard methods (see e.g. \cite{friedman.75}) one can show that under our assumptions, for every such option, the pricing function $P$ belongs to the class $C^{2,2,1}((0,\infty)\times \mathbb R \times [0,T))$ and satisfies the PDE
\begin{align}
\tilde a \frac{\partial P}{\partial y} + \tilde {\mathcal L} P = 0,\label{pde}
\end{align}
where we define
\[
\tilde {\cal L}f=\frac{\partial f}{\partial t}+\frac{S^{2}\tilde\sigma(y)^{2}}{2}\frac{\partial^{2}f}{\partial S^{2}}+\frac{\tilde b^{2}}{2}\frac{\partial^{2}f}{\partial y^{2}}+S\tilde\sigma(y) \tilde b\tilde \rho\frac{\partial^{2}f}{\partial S\partial y}.\]
In addition (see \cite{romano.touzi}), the price of any such European option satisfies
\begin{align}
\frac{\partial P}{\partial y}>0,\quad \forall (S,y,t) \in (0,\infty)\times \mathbb R \times [0,T).\label{complete}
\end{align}

We shall use the following decay property of the derivatives of call and put prices (see Appendix A for the proof).
\begin{lemma}\label{limits.lm}
Let $P$ be the price of a call or a put option with strike $K$ and maturity date $T$. Then
\begin{align*}
&\lim_{K\to +\infty} \frac{\partial P(S,y,t)}{\partial y} = \lim_{K\to 0} \frac{\partial P(S,y,t)}{\partial y}=0,\\
&\lim_{K\to +\infty} \frac{\partial^2 P(S,y,t)}{\partial y^2} = \lim_{K\to 0} \frac{\partial^2 P(S,y,t)}{\partial y^2} = 0,\\
&\lim_{K\to +\infty} \frac{\partial^2 P(S,y,t)}{\partial S^2} = \lim_{K\to 0} \frac{\partial^2 P(S,y,t)}{\partial S^2} = 0,\\
&\text{and}\quad \lim_{K\to +\infty} \frac{\partial^2 P(S,y,t)}{\partial S\partial y} = \lim_{K\to 0} \frac{\partial^2 P(S,y,t)}{\partial S\partial y} = 0
\end{align*}
for all $(y,t)\in \mathbb R\times [0,T)$. All the above derivatives are continuous in $K$ and the limits are uniform in $S,y,t$ on any compact subset of $(0,\infty)\times \mathbb R \times [0,T)$.  
\end{lemma}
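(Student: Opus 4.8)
The plan is to combine the Romano--Touzi mixing representation of the price with put--call parity, and then to pass the limit $K\to 0,\infty$ inside an expectation by dominated convergence. First I would reduce the four statements to the case of an \emph{out-of-the-money} option: since with zero interest rate the call and the put with the same strike differ by $S-K$, which is affine in $S$ and independent of $y$, the quantities $\partial_y P$, $\partial_y^2 P$, $\partial_S^2 P$ and $\partial_{Sy}^2 P$ coincide for the call and the put. Hence it suffices to treat the call as $K\to\infty$ and the put as $K\to 0$, so that in each limit the relevant contract is deep out of the money.

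Next I would condition on the trajectory of $\tilde W^2$ in \eqref{eq:qsde1}--\eqref{eq:qsde2}. Because $Y$ is driven only by $\tilde W^2$ while the idiosyncratic Brownian motion $\tilde W^1$ is independent, the conditional law of $\log S_T$ given $\mathcal F^{\tilde W^2}_T$ is Gaussian, and one obtains the mixing formula $P(S,y,t)=E^{\mathbb Q}\!\left[C_{\mathrm{BS}}\big(S\xi,K,V\big)\,\middle|\,Y_t=y\right]$, where $V=\int_t^T \tilde\sigma(Y_s)^2\big(1-\tilde\rho(Y_s,s)^2\big)\,ds$ is the effective total variance, $\xi=\exp\!\big(\int_t^T \tilde\sigma\tilde\rho\,d\tilde W^2-\tfrac12\int_t^T \tilde\sigma^2\tilde\rho^2\,ds\big)$ is the random effective forward factor, and $C_{\mathrm{BS}}(x,K,v)$ is the Black--Scholes price with $d_1=\frac{\log(x/K)+v/2}{\sqrt v}$ (the put being handled identically). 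Both $\xi$ and $V$ depend on $y$ only through the initial condition $Y_t=y$. I would then differentiate under the expectation sign, which is justified by Assumptions (ii)--(iii): the flows $\partial_y Y_s$ and $\partial_y^2 Y_s$ solve linear SDEs with bounded coefficients, so $\xi$ and $V$ are twice differentiable in $y$ with derivatives in every $L^p(\mathbb Q)$, while boundedness of $\tilde\sigma$ gives all exponential moments of $\xi$.

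Carrying out the differentiation expresses each of the four quantities as an expectation of a Black--Scholes Greek multiplied by polynomially growing flow factors: $\partial_S^2 P=E^{\mathbb Q}[\xi^2\,\Gamma_{\mathrm{BS}}]$, while $\partial_y P$, $\partial_{Sy}^2 P$ and $\partial_y^2 P$ produce linear combinations of the Delta, Gamma, Vega, Vanna and Volga, each weighted by products of $\partial_y\xi,\partial_yV,\partial_y^2\xi,\partial_y^2V$. The decisive observation is that for an out-of-the-money contract \emph{every} Greek appearing here carries a vanishing factor: the Gamma, Vega, Vanna and Volga all contain the Gaussian density $n(d_1)$, and the only surviving Delta term is $N(d_1)$ for the call (as $K\to\infty$) or $N(d_1)-1=-N(-d_1)$ for the put (as $K\to 0$). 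Using Assumption (i) together with $\underline\sigma\le\tilde\sigma\le\overline\sigma$, the effective variance is pinched, $\underline\sigma^2\varepsilon(T-t)\le V\le \overline\sigma^2(T-t)$, so on any compact subset of $(0,\infty)\times\mathbb R\times[0,T)$, on which $t\le T-\delta$, one has $\sqrt V\in[c_1,c_2]$ with $0<c_1\le c_2$. Consequently, for each fixed $\omega$, $d_1\to-\infty$ as $K\to\infty$ and $d_1\to+\infty$ as $K\to 0$, whence every one of these Greeks tends to $0$ pointwise while remaining bounded by a constant times $n(d_1)\le(2\pi)^{-1/2}$ or by $1$.

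It then remains to pass to the limit under $E^{\mathbb Q}$. Each integrand is dominated, uniformly in $K$, by a fixed product of powers of $\xi$ and of the flow derivatives $\partial_y\xi,\partial_yV,\partial_y^2\xi,\partial_y^2V$; since $\xi$ has all exponential moments and the flow derivatives all polynomial moments, H\"older's inequality shows this dominating random variable lies in $L^1(\mathbb Q)$, so dominated convergence yields that each derivative of $P$ tends to $0$. Uniformity of the limits in $(S,y,t)$ over a compact set follows because the moment bounds are uniform there (continuity of the flow in the initial data and boundedness of the coefficients) and because $d_1\to\mp\infty$ uniformly over the compact. The main obstacle is precisely this last point: although each Greek decays pointwise, when $\xi(\omega)$ is large the argument $S\xi$ need not be far from $K$, so the decay is not uniform in $\omega$; the resolution is that the sub-Gaussian tails of $\log S_T$ inherited from the boundedness of $\tilde\sigma$ (equivalently, the finiteness of all moments of $\xi$) supply an integrable, $K$-independent envelope, so that the pointwise decay of the Greeks suffices to conclude. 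Finally, continuity of all derivatives in $K$ follows from the same dominated-convergence argument applied to difference quotients in $K$.
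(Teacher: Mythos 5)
Your proof is correct in substance, but it takes a genuinely different route from the paper's. The paper argues analytically: after the change of variable $x=\log(S/K)$ it represents the put price through the fundamental solution of the pricing PDE \eqref{pde}, reads off the gamma decay from the Gaussian bound \eqref{fundsolbound}, and then obtains the vega, vanna and volga by differentiating the PDE itself in $y$ and $x$ (justified by Krylov's local regularity results), representing each derivative by a Duhamel formula against the fundamental solution of the differentiated operator. This yields explicit quantitative bounds of the form $CK\sqrt{T-t}\,e^{-c\log^2(S/K)/(T-t)}$, from which the limits, their uniformity on compacts, and continuity in $K$ all follow at once. You instead argue probabilistically: the Romano--Touzi mixing representation (a device the paper invokes only for the positivity property \eqref{complete}), reduction to the out-of-the-money contract by put--call parity (the paper uses parity too, in the direction puts-to-calls), differentiation under the expectation via stochastic-flow theory, and dominated convergence. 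Your route avoids all fundamental-solution machinery and makes the mechanism transparent -- every term in the expansion carries either a factor $n(d_1)$ or an out-of-the-money delta -- but it loses the explicit Gaussian decay rates in $\log(S/K)$, which the paper gets for free and which are strictly stronger than the lemma.

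Two points need tightening. First, put--call parity in the form $C-P=S-K$ requires $S$ to be a true $\mathbb Q$-martingale; this holds here because $\tilde\sigma$ is bounded, but it should be said. Second, your uniformity argument is incomplete as written: dominated convergence with a $K$-independent envelope gives the limit at each fixed $(S,y,t)$, not uniformity over a compact, and the claim that $d_1\to\mp\infty$ uniformly over the compact is false since $d_1$ is random. The repair is the quantitative version of your envelope remark: for $K\to\infty$, split the expectation on $\{\xi\le K^{1/2}\}$, where $\log(S\xi/K)\le\log S-\tfrac12\log K\to-\infty$ uniformly over the compact (using that $V$ is pinched between $\varepsilon\underline\sigma^2\delta$ and $\overline\sigma^2 T$ when $t\le T-\delta$), and on its complement, where Cauchy--Schwarz together with $\mathbb Q(\xi>K^{1/2})\le K^{-1}E^{\mathbb Q}[\xi^2]$ and moment bounds uniform over the compact gives a uniformly small contribution; the symmetric splitting (using moments of $\xi^{-1}$) handles $K\to 0$. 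With that standard refinement your argument is complete.
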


\section{The optimal arbitrage portfolio}
\label{portfolio.sec}
We study the arbitrage from
the perspective of the trader, who knows that the market is using
a misspecified stochastic volatility model to price the options.  We assume full observation: at every date $t$, the trader possesses the information given by the $\sigma$-field $\mathcal F_t$ and knows the deterministic functions $\tilde \sigma$, $\tilde \rho$, $\tilde a$ and $\tilde b$. In Section \ref{sabr.sec} we test the robustness of our results with respect to this assumption. 

To benefit from the market misspecification, our informed trader sets up a dynamic self-financing delta and vega-neutral portfolio
$X_{t}$ with zero initial value, containing, at each date $t$, a stripe of European call or put options with a common expiry date $T$.  In addition, the portfolio contains a quantity $-\delta_t$ of stock and some amount $B_t$ of cash.

To denote the quantity of options of each strike, we introduce a predictable process $(\omega_t)_{t\geq 0}$ taking values in the space $M$ of signed measures on $[0,\infty)$ equipped with the total variation norm $\|\cdot\|_V$. We refer to \cite{bjork97} for technical details of this construction and rigorous definitions of the stochastic integral and the self-financing portfolio in this setting.  Loosely speaking, $\omega_t(dK)$ is the quantity of options with strikes between $K$ and $K+dK$, where a negative measure corresponds to a short position. We shall see later that for the optimal strategies the  measure $\omega$ is concentrated on a finite number of strikes.

The quantity of options of each strike is continuously readjusted meaning that old options are sold to buy new ones with different strikes. In practice, this readjustment will of course happen at discrete dates due to transaction costs and we analyze the performance of our strategies under discrete rebalancing in section \ref{sabr.sec}. The portfolio is held until date $T^*<T$. 

Finally, unless explicitly mentioned otherwise, we assume that at all times, the total quantity of options of all strikes in the portfolio is equal to $1$:
\begin{align}
\|\omega_t\|_V \equiv \int |\omega_t(dK)| \equiv 1\label{int1}
\end{align}
for all $t$. This position constraint ensures that the profit maximization problem is well posed in the presence of arbitrage opportunities (otherwise the profit could be increased indefinitely by increasing the size of positions). If the portfolio is rebalanced completely at discrete equally spaced dates, and the transaction cost per one unit of option is constant, this assumption implies that independently of the composition of the portfolio, the transaction cost paid at each readjustment date is constant, and therefore does not influence the optimization problem. This constraint is also natural from the point of view of an option exchange or a market maker who wants to structure standardized option spreads to satisfy a large number of retail clients, because the number of options in such a spread must be fixed. 

\begin{remark}
For an individual
  trader, who is trying to optimize her option portfolio, it may also be natural to impose a \emph{margin constraint}, that is, a constraint on the capital which the trader uses to meet the margin requirements of the exchange. We discuss option trading under margin constraint in Section \ref{margin.sec}.
\end{remark}

The value of the resulting portfolio
is,
\[
X_{t}=\int P^K(S_t,Y_t,t) \omega_t(dK) -\delta_t S_t + B_t,\]
where the subscript $K$ denotes the strike of the option. 
Together with Lemma \ref{limits.lm}, our assumptions ensure (see \cite[section 4]{bjork97}) that the dynamics of this portfolio are,
\begin{align*}
dX_{t}&=\left(\int \omega_t(dK)  {\cal L}P^K\right) dt+\left(\int \omega_t(dK)\frac{\partial P^K}{\partial S}\right)dS_t\\&+\left(\int \omega_t(dK)\frac{\partial P^K}{\partial y}\right)dY_t-\delta_t dS_t
\end{align*}
where \[
{\cal L}f=\frac{\partial f}{\partial t}+\frac{S_t^{2}\sigma_t^{2}}{2}\frac{\partial^{2}f}{\partial S^{2}}+\frac{b_t^{2}}{2}\frac{\partial^{2}f}{\partial y^{2}}+S_t\sigma_t b_t\rho_t\frac{\partial^{2}f}{\partial S\partial y}\]

To make the portfolio instantaneosly risk-free, we choose, \[
\int \omega_t(dK)\frac{\partial P^K}{\partial y}=0,\qquad \int \omega_t(dK) \frac{\partial{P^K}}{\partial S}=\delta_t\]
to eliminate the $d Y_t$ and $dS_t$ terms. We emphasize that these
hedge ratios are always computed using the market parameters i.e. $\tilde{a}$, $\tilde{b}$ and $\tilde{\rho}$. The portfolio dynamics then become
\begin{eqnarray}
dX_{t}= & \int\omega_t(dK) {\cal L}{P^K}dt,\label{eq:truehedgineq}\end{eqnarray}
and substituting the equation \eqref{pde} into this formula and using the vega-neutrality, we obtain the risk-free profit from model misspecification:
\begin{equation}
dX_t=\int\omega_t(dK) ({\cal L}-\tilde{{\cal L}})P^K dt.\label{eq:payoff0}
\end{equation}
or, at the liquidation date $T^*$,
\begin{equation}
X_{T^*}=\int_{0}^{T^*} \int\omega_t(dK) ({\cal L}-\tilde{{\cal L}})P^K dt,\label{eq:payoff1}\end{equation}
where,
\begin{align}
({\cal L}-\tilde{{\cal L}})P^K&=\frac{S_t^2(\sigma_t^2 - \tilde \sigma^2(Y_t))}{2}\frac{\partial^2 P^K}{\partial S^2} + \frac{(b_t^{2}-\tilde{b}_t^{2})}{2}\frac{\partial^{2}{P^K}}{\partial y^{2}}\notag\\&+S_t(\sigma_t b_t\rho_t-\tilde \sigma(Y_t)\tilde{b}_t\tilde{\rho}_t)\frac{\partial^{2}{P^K}}{\partial S\partial y}\label{diffop}
\end{align}
To maximize the arbitrage profit at each date, we therefore need to solve the following optimization problem:
\begin{align}
&\text{Maximize}\quad \mathcal P_t = \int\omega_t(dK) ({\cal L}-\tilde{{\cal L}})P^K\label{arbprofit}\\
&\text{subject to}\quad \int |\omega_t(dK)|=1 \quad \text{and} \quad \int \omega_t(dK)\frac{\partial P^K}{\partial y}=0.\label{constraints}
\end{align}

\begin{remark}
It should be pointed out that in this paper the arbitrage portfolio is required to be instantaneously risk-free and the performance is measured in terms of instantaneous arbitrage profit. This corresponds to the standard practice of option trading, where the trader is usually only allowed to hold delta and vega netral portfolios and the performance is evaluated over short time scales.  This rules out strategies which are locally risky but have a.s. positive terminal pay-off, such as the strategies based on strict local martingales, which may be admissible under the standard definition of arbitrage. Optimality of arbitrage strategies for an equity market under the standard definition of arbitrage has recently been studied in \cite{fernholz.karatzas.08}. 
\end{remark}

The following result shows that a spread of only two options (with strikes and weights continuously readjusted) is sufficient to solve the problem \eqref{arbprofit}. 

\begin{propo}\label{genopt.prop}
The instantaneous arbitrage profit \eqref{arbprofit}
is maximized by
\begin{align}
\omega_t(dK) = w^1_t \delta_{K^1_t}(dK) - w^2_t \delta_{K^2_t}(dK), \label{optsol}
\end{align}
where $\delta_K(dK)$ denotes the unit point mass at $K$, $(w^1_t,w^2_t)$ are time-dependent optimal weights given by
$$
w^1_t = \frac{\frac{\partial P^{K_2}}{\partial y}}{\frac{\partial P^{K_1}}{\partial y}+\frac{\partial P^{K_2}}{\partial y}},\qquad w^2_t = \frac{\frac{\partial P^{K_1}}{\partial y}}{\frac{\partial P^{K_1}}{\partial y}+\frac{\partial P^{K_2}}{\partial y}},
$$
and $(K^1_t,K^2_t)$ are time-dependent optimal strikes given by
\begin{align}
(K^1_t,K^2_t) = \arg\max_{K^1,K^2} \frac{\frac{\partial P^{K^2}}{\partial y}({\cal L}-\tilde{{\cal L}})P^{K^1} -\frac{\partial P^{K^1}}{\partial y}({\cal L}-\tilde{{\cal L}})P^{K^2}}{\frac{\partial P^{K^1}}{\partial y}+\frac{\partial P^{K^2}}{\partial y}}.\label{2strikes}
\end{align}
\end{propo}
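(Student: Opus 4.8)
The plan is to read \eqref{arbprofit}--\eqref{constraints} as the maximization of a \emph{linear} functional on the space $M$ of signed measures, over the convex set carved out by the total-variation normalization and the single linear vega-neutrality constraint, and then to exploit the fact that only two scalar functionals of $\omega$ enter the problem. To organize this I would set $g(K):=\partial P^K/\partial y$ and $\Phi(K):=(\mathcal L-\tilde{\mathcal L})P^K$, both continuous in $K$. By \eqref{complete} one has $g(K)>0$ for every $K$, while Lemma \ref{limits.lm} gives $g(K)\to 0$ and $\Phi(K)\to 0$ as $K\to 0$ and as $K\to\infty$, so that $g,\Phi\in C_0\bigl([0,\infty)\bigr)$. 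The task is then to maximize $\int \Phi\,d\omega$ subject to $\|\omega\|_V=1$ and $\int g\,d\omega=0$.

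First I would relax $\|\omega\|_V=1$ to $\|\omega\|_V\le 1$; this does not lower the supremum, and the feasible set is now the intersection of the total-variation unit ball $B$ with the hyperplane $\{\int g\,d\omega=0\}$. The key reduction is to project to the plane via the linear map $T\colon \omega\mapsto\bigl(\int g\,d\omega,\ \int\Phi\,d\omega\bigr)\in\mathbb R^2$. Since $g,\Phi\in C_0$, the map $T$ is weak-$*$ continuous, and $B$ is weak-$*$ compact by Banach--Alaoglu with extreme points $\{\pm\delta_K:K\ge 0\}$; hence $T(B)$ is the compact convex hull of the planar curve $\gamma:=\{\pm(g(K),\Phi(K)):K\ge 0\}$, the decay in Lemma \ref{limits.lm} ensuring that both ends of $\gamma$ return to the origin. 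The problem becomes the finite-dimensional one of maximizing the second coordinate over $T(B)\cap\{u=0\}$, whose value is therefore attained; moreover, because $\gamma$ (and hence $T(B)$) is symmetric about the origin, this maximal value $z^*$ is automatically nonnegative and coincides with the supremum of the original problem under $\|\omega\|_V=1$.

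Next I would locate the optimizer on the boundary: the topmost point $(0,z^*)$ of the vertical slice lies on $\partial T(B)$, for otherwise points $(0,z^*+\varepsilon)$ would be feasible. A boundary point of a planar convex body generated by $\gamma$ lies on a face of dimension at most one, hence is a convex combination of at most two points of $\gamma$, i.e.
\[
(0,z^*)=t\,\sigma_1\bigl(g(K^1),\Phi(K^1)\bigr)+(1-t)\,\sigma_2\bigl(g(K^2),\Phi(K^2)\bigr),\qquad t\in[0,1],\ \sigma_i\in\{-1,+1\}.
\]
This is the image of the two-point measure $\omega=t\sigma_1\delta_{K^1}+(1-t)\sigma_2\delta_{K^2}$, which satisfies $\|\omega\|_V=1$. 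Since $g>0$ everywhere, vanishing of the first coordinate, $t\sigma_1 g(K^1)+(1-t)\sigma_2 g(K^2)=0$, forces $\sigma_1$ and $\sigma_2$ to have opposite signs; taking $\sigma_1=+1$, $\sigma_2=-1$ and solving $t\,g(K^1)=(1-t)\,g(K^2)$ gives exactly $t=g(K^2)/(g(K^1)+g(K^2))=w^1_t$ and $1-t=g(K^1)/(g(K^1)+g(K^2))=w^2_t$. Substituting into $z^*=t\,\Phi(K^1)-(1-t)\,\Phi(K^2)$ reproduces the objective ratio in \eqref{2strikes}, and maximizing the remaining free parameters $(K^1,K^2)$ yields the optimal strikes, completing the argument.

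The step I expect to be the main obstacle is the rigorous reduction to two strikes, namely justifying both that the supremum over the infinite-dimensional space $M$ is attained and that it is realized on a two-point support. The projection to $\mathbb R^2$ is precisely what overcomes this: pushing the problem through the pair $\bigl(\int g\,d\omega,\int\Phi\,d\omega\bigr)$ turns existence into compactness of a planar convex hull (supplied by the decay in Lemma \ref{limits.lm}) and turns the support bound into Carathéodory's theorem on a boundary face in dimension two, while the strict positivity $g>0$ from \eqref{complete} is exactly what fixes the opposite signs and hence the explicit weights. One should additionally record the degenerate case $z^*=0$ (no profitable spread), where the two-point formula remains valid but delivers zero instantaneous profit.
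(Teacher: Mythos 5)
Your proof is correct, and it takes a genuinely different route from the paper's. The paper argues in two elementary steps: first it uses the decay estimates of Lemma \ref{limits.lm} together with the positivity \eqref{complete} to show that the two-strike functional in \eqref{2strikes} is small outside a compact set of strike pairs, so its maximum is attained (or it vanishes identically, the degenerate case); second, for an arbitrary feasible $\omega_t$ it writes the Jordan decomposition $\omega_t=\omega^+_t-\omega^-_t$, normalizes $\nu^\pm_t=\omega^\pm_t/\omega^\pm_t(\mathbb R^+)$, and bounds the profit by two successive suprema (first over $K_1$, then over $K_2$), landing exactly on $\sup_{K_1,K_2}F(K_1,K_2)$, which the two-point portfolio attains. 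You replace this algebraic chain by convex geometry: identifying $M$ with $C_0([0,\infty))^*$ (legitimate, since Lemma \ref{limits.lm} gives $g,\Phi\in C_0$), you project the weak-$*$ compact unit ball $B$ through $\omega\mapsto\bigl(\int g\,d\omega,\int\Phi\,d\omega\bigr)$, so that $T(B)=\mathrm{conv}(\gamma)$ with $\gamma=\{\pm(g(K),\Phi(K))\}$, and read off the two-point structure from the fact that a boundary point of a planar convex body lies on a face with at most two extreme points, all belonging to $\gamma$ (Milman/Carath\'eodory); the strict positivity \eqref{complete} then forces the opposite signs and yields precisely the weights $w^1_t,w^2_t$. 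Both proofs consume the same inputs, but they buy different things: the paper's route is self-contained (no duality, no Krein--Milman, just inequalities) and isolates well-posedness of \eqref{2strikes} as a separately verifiable step, whereas your route delivers attainment of the optimal strikes and the two-point reduction simultaneously from compactness, handles the degenerate case $F\equiv 0$ transparently by symmetry of $T(B)$, and explains structurally why the support size is one plus the number of linear constraints---consistent with the paper's observation in Section \ref{margin.sec} that imposing both the margin and position constraints leads to three-strike portfolios. Two details worth making explicit in your write-up: the passage from the relaxed ball $\|\omega\|_V\le 1$ back to the equality constraint is justified only a posteriori, because the constructed two-point optimizer has total variation exactly one in the non-degenerate case; and one of the two face points could a priori be the origin (the common limit of both ends of $\gamma$), in which case $g>0$ forces the representation to collapse to the degenerate case $z^*=0$, which you already cover.
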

\begin{proof}
\emph{Step 1.} We first show that the optimization problem \eqref{2strikes} is well-posed, that is, the maximum is attained for two distinct strike values. Let $F(K^1,K^2)$ denote the function to be optimized in \eqref{2strikes}. From the property \eqref{complete} and Lemma \ref{limits.lm} it follows that $F$ is continuous in $K_1$ and $K_2$. Let us show that for every $\varepsilon>0$ we can find an interval $[a,b]$ such that $|F(K^1,K^2)|\leq \varepsilon$ for all $(K^1,K^2)\notin [a,b]^2$. We introduce the shorthand notation
\begin{align*}
&g(K) = \frac{\partial P^K}{\partial y},\quad f(K) = ({\cal L}-\tilde{{\cal L}})P^K,\\ 
&F(K^1,K^2) = \frac{g(K^2)f(K^1)-g(K^1)f(K^2)}{g(K^1) + g(K^2)}.
\end{align*}
Fix $\varepsilon>0$. By Lemma \ref{limits.lm}, there exists $N>0$ with $|f(K)|\leq \varepsilon$ for all $K: K\notin [e^{-N},e^{N}]$. Let $\delta = \inf_{|log K|\leq N} g(K)$. Property \eqref{complete} entails that $\delta>0$. Then we can find $M>0$ such that $g(K)\leq \frac{\varepsilon \delta}{sup_{|log K|\leq N}|f(K)|}$ for all $K: K\notin [e^{-M},e^{M}]$ (if the denominator is zero, any $M>0$ can be used). It follows that for all $(K^1,K^2)\notin  [e^{-N}, e^N]\times [e^{-M},e^M]$, $\left|\frac{g(K^2)f(K^1)}{g(K^1) + g(K^2)}\right|\leq \varepsilon$. In the same manner, we can find a rectangle for the second term in $F$, and, taking the square containing both rectangles, we get the desired result. 

Suppose now that for some $K^1$ and $K^2$, $F(K^1,K^2)>0$. Then, by taking $\varepsilon$ sufficiently small, the above argument allows to find a compact set containing the maximum of $F$, and hence the maximum is attained for two strikes which cannot be equal since $F(K,K)\equiv 0$. If, on the other hand, $F(K^1,K^2)\leq 0$ for all $K_1,K_2$, then this means that $F(K^1,K^2)\equiv 0$ and any two strikes can be chosen as maximizers of \eqref{2strikes}. 

\emph{Step 2.} We now show that the two-point solution suggested by this proposition is indeed the optimal one. Let $\omega_t$ be any measure on $\mathbb R^+$ satisfying the constraints \eqref{constraints}. Let $\omega_t = \omega^+_t - \omega^-_t$ be the Jordan decomposition of $\omega_t$ and define $\nu^+_t := \frac{\omega^+_t}{\omega^+_t(\mathbb R^+)}$ and $\nu^-_t := \frac{\omega^-_t}{\omega^-_t(\mathbb R^+)}$. These measures are well-defined due to conditions \eqref{complete} and \eqref{constraints}. From the same conditions,
\begin{align*}
\omega^-_t(\mathbb R^+) = \frac{\int \frac{\partial P^K}{\partial y}d\nu^+_t}{\int \frac{\partial P^K}{\partial y}d\nu^+_t + \int \frac{\partial P^K}{\partial y}d\nu^-_t}, \qquad\omega^+_t(\mathbb R^+) = \frac{\int \frac{\partial P^K}{\partial y}d\nu^-_t}{\int \frac{\partial P^K}{\partial y}d\nu^+_t + \int \frac{\partial P^K}{\partial y}d\nu^-_t}.
\end{align*} 
The instantaneous arbitrage profit \eqref{arbprofit} can then be written as
\begin{align*}
\mathcal P_t &=  \int ({\cal L}-\tilde{{\cal L}})P^Kd\omega^+_t-\int ({\cal L}-\tilde{{\cal L}})P^Kd\omega^-_t\\
&= \frac{\int \frac{\partial P^K}{\partial y}d\nu^-_t \int ({\cal L}-\tilde{{\cal L}})P^K d\nu^+_t-\int \frac{\partial P^K}{\partial y}d\nu^+_t \int ({\cal L}-\tilde{{\cal L}})P^Kd\nu^-_t}{\int \frac{\partial P^K}{\partial y}d\nu^+_t + \int \frac{\partial P^K}{\partial y}d\nu^-_t}.
\end{align*}

Then,
\begin{align*}
\mathcal P_t & = \frac{\int gd\nu^-_t \int f d\nu^+_t-\int gd\nu^+_t \int fd\nu^-_t}{\int gd\nu^+_t + \int gd\nu^-_t},\\
&\leq \frac{\int \nu^-_t(dK) \left\{g(K)+\int g d\nu^+\right\}\sup_{K_1} \frac{g(K_1)\int f d\nu_t^+ - f(K_1)\int g d\nu^+_t}{g(K_1) + \int g d\nu_t^+}}{\int gd\nu^+_t + \int gd\nu^-_t}\\
&= \sup_{K_1} \frac{g(K_1)\int f d\nu_t^+ - f(K_1)\int g d\nu^+_t}{g(K_1) + \int g d\nu_t^+}\\
&\leq \sup_{K_1}\frac{\int \nu^+_t(dK)\left\{g(K_1)+g(K)\right\}\sup_{K_2}\frac{g(K_1)f(K_2)-f(K_1)g(K_2)}{g(K_1)+g(K_2)} }{g(K_1) + \int g d\nu_t^+} \\
&= \sup_{K_1,K_2}\frac{g(K_1)f(K_2)-f(K_1)g(K_2)}{g(K_1)+g(K_2)} 
\end{align*}
Since for the solution \eqref{optsol} the above $\sup$ is attained, this solution is indeed optimal.
\end{proof}

The optimal strategy of Proposition \ref{genopt.prop} is adapted, but not necessarily predictable. In particular, if the maximizer in \eqref{2strikes} is not unique, the optimal strikes may jump wildly between different maximizers leading to strategies which are impossible to implement. However, in practice the maximizer is usually unique (see in particular the next section), and the following result shows that this guarantees the weak continuity, and hence predictability, of the optimal strategy. 

\begin{propo}
Let $t^*<T$, and assume that the coefficients $\sigma_t$, $b_t$ and $\rho_t$ are a.s.~continuous at $t^*$ and that the maximizer in the optimization problem \eqref{2strikes} at $t^*$ is a.s.~unique. Then the optimal strikes $(K^1_t,K^2_t)$ are a.s.~continuous at $t^*$, and the optimal strategy $(\omega_t)$ is a.s.~weakly continuous at $t^*$.  
\end{propo}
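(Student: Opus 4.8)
The plan is to read this as a continuity-of-argmax statement and to argue pathwise on the almost-sure event on which $\sigma_t, b_t, \rho_t$ are continuous at $t^*$, the maximizer of the objective in \eqref{2strikes} at $t^*$ is unique, and (automatically, since $S$ and $Y$ are diffusions) the paths $t\mapsto(S_t,Y_t)$ are continuous. As in the proof of Proposition \ref{genopt.prop} I write $g_t(K)=\partial P^K/\partial y$ and $f_t(K)=(\mathcal L-\tilde{\mathcal L})P^K$, both evaluated at $(S_t,Y_t,t)$ with the real-world coefficients $\sigma_t,b_t,\rho_t$, and let $F_t(K^1,K^2)$ be the objective in \eqref{2strikes}. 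The two structural inputs I need are: (i) for each fixed $(K^1,K^2)$, $F_t(K^1,K^2)\to F_{t^*}(K^1,K^2)$ as $t\to t^*$, which follows from $S_t\to S_{t^*}$, $Y_t\to Y_{t^*}$, the assumed continuity of $\sigma_t,b_t,\rho_t$, and the joint continuity in $(S,y,t)$ of the derivatives of $P^K$ entering $f_t$ and $g_t$; and (ii) a coercivity estimate confining all maximizers to a fixed compact rectangle, uniformly in $t$ near $t^*$.

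For (ii) I would re-run Step 1 of the proof of Proposition \ref{genopt.prop}, now tracking the $t$-dependence. Because the continuous paths keep $(S_t,Y_t,t)$ in a compact neighborhood of $(S_{t^*},Y_{t^*},t^*)\subset(0,\infty)\times\mathbb R\times[0,T)$ and the coefficients $\sigma_t,b_t,\rho_t$ are bounded there, the uniform-on-compacts limits of Lemma \ref{limits.lm} give decay of $f_t(K)$ and $g_t(K)$ as $|\log K|\to\infty$ that is uniform in $t$ near $t^*$, while $\inf g_t(K)$ over $\{|\log K|\le N\}$ and $t$ near $t^*$ is strictly positive by joint continuity and \eqref{complete}. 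This yields a single compact $\mathcal K=[a,b]^2$ outside which $|F_t|\le\varepsilon$ for all $t$ in a neighborhood of $t^*$, and on $\mathcal K$ the pointwise convergence in (i) upgrades to uniform convergence $F_t\to F_{t^*}$. I also note that uniqueness of the maximizer forces $M_{t^*}:=\sup F_{t^*}>0$: since $F_{t^*}(K,K)=0$ and $F_{t^*}(K^2,K^1)=-F_{t^*}(K^1,K^2)$, the value $M_{t^*}=0$ would force $F_{t^*}\equiv 0$, making every pair a maximizer.

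With these in hand the argmax convergence is routine. From $M_t\ge F_t(K^1_*,K^2_*)\to M_{t^*}$ and $M_t=\max_{\mathcal K}F_t\to\max_{\mathcal K}F_{t^*}=M_{t^*}$ I obtain $M_t\to M_{t^*}$; in particular $M_t>M_{t^*}/2>\varepsilon$ for $t$ near $t^*$, so every maximizer lies in $\mathcal K$. Given any $t_n\to t^*$, compactness extracts a subsequence along which $(K^1_{t_n},K^2_{t_n})\to(K^1_\infty,K^2_\infty)\in\mathcal K$, and uniform convergence gives $F_{t^*}(K^1_\infty,K^2_\infty)=\lim M_{t_n}=M_{t^*}$, so the limit maximizes $F_{t^*}$ and hence equals the unique pair $(K^1_*,K^2_*)$. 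Since every subsequential limit coincides, $(K^1_t,K^2_t)\to(K^1_*,K^2_*)$, which is the asserted continuity of the optimal strikes.

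For the weak continuity of $\omega_t$ I combine strike convergence with convergence of the weights. Since $g_t\to g_{t^*}$ uniformly on $\mathcal K$ and $K^i_t\to K^i_*$, a two-term triangle inequality gives $g_t(K^i_t)\to g_{t^*}(K^i_*)$, and because $g_{t^*}(K^1_*)+g_{t^*}(K^2_*)>0$ by \eqref{complete} the weights satisfy $w^i_t\to w^i_{t^*}$. Then for any bounded continuous $\varphi$, $\int\varphi\,d\omega_t=w^1_t\varphi(K^1_t)-w^2_t\varphi(K^2_t)\to w^1_{t^*}\varphi(K^1_*)-w^2_{t^*}\varphi(K^2_*)=\int\varphi\,d\omega_{t^*}$, which is exactly weak continuity at $t^*$. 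I expect the only genuine work to be item (ii), namely checking that the thresholds $N,M$ and the lower bound $\delta$ from Step 1 of Proposition \ref{genopt.prop} can be chosen uniformly over a neighborhood of $t^*$; once that uniform coercivity is secured, the remainder is the standard ``uniform convergence plus a unique maximizer implies argmax convergence'' argument.
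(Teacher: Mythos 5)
Your proof is correct and follows essentially the same route as the paper's: confine the maximizers to a single compact set uniformly in $t$ near $t^*$ using the uniform-on-compacts convergence of Lemma \ref{limits.lm} together with pathwise boundedness of $(S_t,Y_t,\sigma_t,b_t,\rho_t)$, then deduce convergence of the optimal strikes by compactness/subsequence extraction combined with the a.s.\ uniqueness of the maximizer at $t^*$, and finally pass to weak continuity of $\omega_t$. The only differences are that you make explicit two points the paper leaves implicit --- the observation that antisymmetry of $F_{t^*}$ plus uniqueness forces $\sup F_{t^*}>0$, so that maximizers for $t$ near $t^*$ are genuinely trapped in the compact set, and the convergence of the weights $w^i_t$ via \eqref{complete} --- which are refinements of the same argument rather than a different one.
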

\begin{proof}
To emphasize the dependence on $t$, we denote by $F_t(K_1,K_2)$ the function being maximized in \eqref{2strikes}. By the continuity of $\sigma_t$, $b_t$ $\rho_t$, $S_t$ and $Y_t$, there exists a neighborhood $(t_1,t_2)$ of $t^*$ (which may depend on $\omega\in \Omega$), in which these processes are bounded. Since the convergence in Lemma \ref{limits.lm} is uniform on compacts, similarly to the proof of Proposition \ref{genopt.prop}, for every $\varepsilon>0$, we can find an interval $[a,b]$ (which may depend on $\omega \in \Omega$) such that $|F_t(K^1,K^2)|\leq \varepsilon$ $\forall (K^1,K^2) \notin [a,b]^2$ and $\forall t \in (t_1,t_2)$. This proves that the optimal strikes $(K^1_t,K^2_t)$ are a.s.~ bounded in the neighborhood of $t^*$. 

Let $(t_n)_{n\geq 1}$ be a sequence of times converging to $t^*$ and $(K^1_{t_n},K^2_{t_n})_{n\geq 1}$ the corresponding sequence of optimal strikes. By the above argument, this sequence is bounded and a converging subsequence $(K^1_{t_{n_k}},K^2_{t_{n_k}})_{k\geq 1}$ can be extracted. Let $(\tilde K^1, \tilde K^2)$ be the limit of this converging subsequence. Since $(K^1_{t_{n_k}},K^2_{t_{n_k}})$ is a maximizer of $F_{t_{n_k}}$ and by continuity of $F$, for arbitrary strikes $(k^1,k^2)$, 
$$
F_{t^*}(\tilde K^1, \tilde K^2) = \lim_{k\to \infty} F_{t_{n_k}} (K^1_{t_{n_k}},K^2_{t_{n_k}}) \geq \lim_{k\to \infty} F_{t_{n_k}} (k^1,k^2) = F_{t^*}(k^1,k^2).
$$ 
This means that $(\tilde K^1, \tilde K^2)$ is a maximizer of $F_{t^*}(\cdot,\cdot)$, and, by the uniqueness assumption, $\tilde K^1 = K^1_{t^*}$ and $\tilde K^2 = K^2_{t^*}$. By the subsequence criterion we then deduce that $K^1_t \to K^1_{t^*}$ and $K^2_t \to K^2_{t^*}$ a.s. as $t\to t^*$. The weak convergence of $\omega_t$ follows directly. 
\end{proof}

\subsection{Trading under margin constraint}
\label{margin.sec}
Solving the optimization problem \eqref{arbprofit} under the constraint \eqref{int1} amounts to finding the optimal (vega-weighted) spread, that is a vega-neutral combination of two options, in which the total quantity of options equals one. This approach guarantees the well-posedness of the optimization problem and provides a new interpretation for commonly traded option contracts such as butterflies and risk reversals (see next section). 

From the point of view of an individual option trader, an interesting alternative to \eqref{int1} is the constraint given by the margin requirements of the stock exchange. In this section, we shall consider in detail the CBOE minimum margins for customer accounts, that is, the margin requirements for retail investors imposed by the Chicago Board of Options Exchange \cite{cboe.margin,santa-clara.saretto.09}. Under this set of rules, margin requirements are, by default, calculated individually for each option position: 
\begin{itemize}
\item Long positions in call or put options must be paid for in full.
\item For a naked short position in a call option, the margin requirement at time $t$ is calculated using the following formula \cite{santa-clara.saretto.09}.
\begin{align*}
M_t& = P^K_t + \max(\alpha S_t - (K-S_t)1_{K>S_t}, \beta S_t) := P^K_t + \lambda^K_t, 
\end{align*}
where $P^K$ is the option price, $\alpha=0.15$ and $\beta = 0.1$.  
\item The margin requirement for a ``long call plus short underlying position'' is equal to the call price plus  short sale proceeds plus $50\%$ of the underlying value. 
\item The margin requirement for a "short call plus long underlying" is $50\%$ of the underlying value.
\end{itemize}
In addition, some risk offsets are allowed for certain spread options, but we shall not take them into account to simplify the treatment. For the same reason, we shall assume that all options held by the trader are call options. Let $\omega_t = \omega^+_t - \omega^-_t$ be the Jordan decomposition of $\omega_t$. The value of a delta-hedged option portfolio can be written as
\begin{align*}
X_{t}=B_t &+ \int \left\{P^K -S_t \frac{\partial P^K}{\partial S} \right\}\omega_t(dK)\\
 = B_t &+ \int \left\{P^K\left(1 - \frac{\partial P^K}{\partial S} \right) + \frac{\partial P^K}{\partial S} (P^K-S_t)\right\}\omega^+_t(dK) \\
&+ \int \left\{-P^K\left(1 - \frac{\partial P^K}{\partial S} \right) + \frac{\partial P^K}{\partial S} (S_t-P^K)\right\}\omega^-_t(dK)
\end{align*}
The margin requirement for this position is 
\begin{align*}
M_t &= \int \left\{P^K + \frac{\partial P^K}{\partial S} (1+\gamma)S_t \right\}\omega^+_t(dK)\\&+\int \left\{(\lambda^K + P^K)\left(1 - \frac{\partial P^K}{\partial S} \right) + \gamma S_t \frac{\partial P^K}{\partial S}\right\}\omega^-_t(dK)
\\ & := \int \beta^+_t(K) \omega^+_t(dK) + \int \beta^-_t(K) \omega^-_t(dK)
\end{align*}
with $\gamma=0.5$. 
Supposing that the trader disposes of a fixed margin account (normalized to $1$), the optimization problem \eqref{arbprofit}--\eqref{constraints} becomes
\begin{align*}
&\text{Maximize}\quad \mathcal P_t = \int\omega_t(dK) ({\cal L}-\tilde{{\cal L}})P^K\\
&\text{subject to}\quad \int \beta^+_t(K) \omega^+_t(dK) + \int \beta^-_t(K) \omega^-_t(dK) = 1 \\ &\text{and} \quad \int \omega_t(dK)\frac{\partial P^K}{\partial y}=0.
\end{align*}

This maximization problem can be treated similarly to \eqref{arbprofit}--\eqref{constraints}. Assume that at time $t$, the problem 
\begin{align}
\arg\sup_{K^1,K^2} \frac{\frac{\partial P^{K^2}}{\partial y}({\cal L}-\tilde{{\cal L}})P^{K^1} -\frac{\partial P^{K^1}}{\partial y}({\cal L}-\tilde{{\cal L}})P^{K^2}}{\beta^-(K^2)\frac{\partial P^{K^1}}{\partial y}+\beta^+(K^1)\frac{\partial P^{K^2}}{\partial y}}.\label{2strikesbeta}
\end{align}
admits a finite maximizer $(K^1_t,K^2_t)$. Then the instantaneous arbitrage profit is maximized by the two-strike portfolio \eqref{optsol} with
$$
w^1_t = \frac{\beta^+(K^1_t)\frac{\partial P^{K_2}}{\partial y}}{\beta^+(K^2_t)\frac{\partial P^{K_1}}{\partial y}+\beta^+(K^1_t)\frac{\partial P^{K_2}}{\partial y}},\qquad w^2_t = \frac{\beta^+(K^2_t)\frac{\partial P^{K_1}}{\partial y}}{\beta^+(K^2_t)\frac{\partial P^{K_1}}{\partial y}+\beta^+(K^1_t)\frac{\partial P^{K_2}}{\partial y}}.
$$
However, since $\beta^+(K)\to 0$ as $K\to +\infty$, the problem \eqref{2strikesbeta} may not admit a finite maximizer for some models. For instance, in the Black-Scholes model,
a simple asymptotic analysis shows that $\beta^+(K)$ converges to zero faster than $\frac{\partial P^K}{\partial y}$ as $K\to +\infty$, while  $\frac{\partial P^K}{\partial y}$ converges to zero faster than $\frac{\partial^2 P^K}{\partial S\partial y}$ and $\frac{\partial^2 P^K}{\partial y^2}$. This means that by keeping $K^2$ fixed and choosing $K^1$ large enough, the instantaneous arbitrage profit \eqref{2strikesbeta} can be made arbitrarily large. In financial terms this means that since the margin constraint does not limit the size of the long positions, it may be optimal to buy a large number of far out of the money options. So the margin constraint alone is not sufficient to guarantee the well-posedness of the problem and a position constraint similar to \eqref{int1} is necessary as well (in particular, the position constraint limits the transaction costs). If both the margin constraint and the position constraint are imposed, the optimal portfolio will in general be a combination of three options
with different strikes. 

While our analysis focused on CBOE margining rules for retail investors, a number of other exchanges, for example, CME for large institutional accounts, use an alternative margining system called Standard Portfolio Analysis of Risk (SPAN). Under this system, the margin requirement is determined as the worst-case loss of the portfolio over a set of 14 market scenarios. In this case, position constraints are also necessary to ensure the well-posedness of the optimization problem. See \cite{avriel.reisman.00} for a numerical study of optimal option portfolios under SPAN constraints in the binomial model.
\section{The Black-Scholes case}
\label{bs.sec}
In this section, we consider the case when the misspecified model is the Black-Scholes model, that is, $\tilde a \equiv \tilde b  \equiv 0$. This means that at each date $t$ the market participants price options using the Black-Scholes model with the volatility value given by the current instantaneous implied volatility $\tilde \sigma(Y_t)$. The process $Y_t$ still has a stochastic dynamics under the real-world probability $\mathbb P$. This is different from the setting of \cite{robustbs}, where one also has $a=b=0$, that is, the instantaneous implied volatility is deterministic. 
In this section, since we do not need to solve partial differential equations, to simplify notation, we set $\tilde \sigma_t \equiv Y_t$, that is, the mapping $\tilde \sigma(\cdot)$ is the identity mapping.
 Recall the formulas for the derivatives of the call/put option price in the Black-Scholes model ($r=0$):
\begin{align*}
&\frac{\partial P}{\partial \sigma} = Sn(d_1)\sqrt{\tau} = Kn(d_2)\sqrt{\tau},\qquad &&\frac{\partial^2 P}{\partial S^2} = \frac{n(d_1)}{S\tilde\sigma\sqrt{\tau}},\\
&\frac{\partial^2 P}{\partial \sigma \partial S} = -\frac{n(d_1) d_2}{\tilde\sigma},\qquad 
&&\frac{\partial^2 P}{\partial \sigma^2} = \frac{S n(d_1)d_1 d_2\sqrt{\tau}}{\tilde\sigma},
\end{align*}
where $d_{1,2} = \frac{m}{\tilde\sigma\sqrt{\tau}}\pm \frac{\tilde\sigma\sqrt{\tau}}{2}$, $\tau= T-t$, $m = \log(S/K)$ and $n$ is the standard normal density.

We first specialize the general optimal solution \eqref{optsol} to the Black-Scholes case. All the quantities $\sigma_t,\tilde\sigma_t,b_t,\rho_t,S_t$ are, of course, time-dependent, but since the optimization is done at a given date, to simplify notation, we omit the subscript $t$. 
\begin{propo}\label{bsopt.prop}
The optimal option portfolio maximizing the instantaneous arbitrage profit \eqref{arbprofit} is described as follows:
\begin{itemize}
\item The portfolio consists of a long position in an option with log-moneyness $m_1 = z_1 \tilde\sigma\sqrt{\tau} - \frac{\tilde\sigma^2\tau}{2}$ and a short position in an option with log-moneyness $m_2 = z_2 \tilde\sigma\sqrt{\tau} - \frac{\tilde\sigma^2\tau}{2}$, where $(z_1,z_2)$ is a maximizer of the function
$$
f(z_1,z_2) = \frac{(z_1-z_2)(z_1+z_2-w)}{e^{z_1^2/2}+e^{z_2^2/2}}
$$
with $w = \frac{\tilde\sigma b\tau+2\sigma\rho}{b\sqrt{\tau}}$.
\item The weights of the two options are chosen to make the portfolio vega-neutral.
\end{itemize}
We define by $P_{opt}$ the instantaneous arbitrage profit realized by the optimal portfolio.
\end{propo}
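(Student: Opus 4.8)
The plan is to derive Proposition~\ref{bsopt.prop} directly from the general optimizer of Proposition~\ref{genopt.prop} by substituting the explicit Black--Scholes Greeks and then performing the change of variables $m = z\tilde\sigma\sqrt\tau - \tilde\sigma^2\tau/2$, which is exactly the substitution $d_1 = z$ in disguise.

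First I would specialize the operator difference \eqref{diffop} by setting $\tilde a=\tilde b\equiv 0$. Since here $y=\tilde\sigma$, the vega $\partial P^K/\partial y$ coincides with the Black--Scholes $\partial P/\partial\sigma = Sn(d_1)\sqrt\tau$, and the terms containing $\tilde b$ drop out, leaving
$$
({\cal L}-\tilde{\cal L})P^K=\frac{S^2(\sigma^2-\tilde\sigma^2)}{2}\frac{\partial^2 P^K}{\partial S^2}+\frac{b^2}{2}\frac{\partial^2 P^K}{\partial y^2}+S\sigma b\rho\,\frac{\partial^2 P^K}{\partial S\partial y}.
$$
Inserting the four explicit derivatives and factoring out $Sn(d_1)/\tilde\sigma$, this becomes $\frac{Sn(d_1)}{\tilde\sigma}$ multiplied by an expression that is a quadratic polynomial in $d_1,d_2$.

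The crux is the change of variable making $d_1=z$. Setting $m=z\tilde\sigma\sqrt\tau-\tilde\sigma^2\tau/2$ yields precisely $d_1=z$ and $d_2=z-\tilde\sigma\sqrt\tau$, and the map $K\mapsto z$ is a decreasing bijection of $(0,\infty)$ onto $\mathbb R$ (so existence of the maximizer is inherited from Step~1 of Proposition~\ref{genopt.prop}). Under this substitution the bracketed quantity takes the form $h(z)=Az^2+Bz+C$ with $A=b^2\sqrt\tau/2$, and a short computation gives $-B/A = (\tilde\sigma b\tau+2\sigma\rho)/(b\sqrt\tau)=w$; verifying this identity is the one genuinely load-bearing calculation. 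Simultaneously the vega collapses to $g=S\sqrt\tau\,n(z)$, which is proportional to $e^{-z^2/2}$.

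Finally I would substitute into the objective $F(K^1,K^2)$ of \eqref{2strikes}. Writing each Greek as a common factor $c\,e^{-z^2/2}$ (with $c=S/(\tilde\sigma\sqrt{2\pi})$) times a polynomial, the vega factors in numerator and denominator cancel, so that $F$ reduces to $\frac{c\,(h(z_1)-h(z_2))}{e^{z_1^2/2}+e^{z_2^2/2}}$. Using $h(z_1)-h(z_2)=A(z_1-z_2)(z_1+z_2-w)$ this equals $cA\cdot f(z_1,z_2)$, where the constant $cA=Sb^2\sqrt\tau/(2\tilde\sigma\sqrt{2\pi})$ is \emph{strictly positive}. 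Since $cA>0$ and $K\mapsto z$ is a bijection, maximizing $F$ over strikes is equivalent to maximizing $f$ over $(z_1,z_2)$, giving the stated optimal log-moneyness values $m_i=z_i\tilde\sigma\sqrt\tau-\tilde\sigma^2\tau/2$; the vega-neutral weights are then exactly those of \eqref{optsol}. I expect the only real obstacles to be the identity $-B/A=w$ and the careful tracking of the positivity of $cA$, which is what ensures we are maximizing rather than minimizing.
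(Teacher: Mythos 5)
Your proposal is correct and follows essentially the same route as the paper: specialize \eqref{diffop} to $\tilde a=\tilde b=0$, substitute the Black--Scholes Greeks into the two-strike objective \eqref{2strikes} of Proposition \ref{genopt.prop}, and change variables to $z=d_1$ so that the objective becomes a positive constant times $f(z_1,z_2)$ (the paper writes this reduced problem as \eqref{bsoptim} and stops there). Your additional checks --- the identity $-B/A=w$, the cancellation of the vega factors, and the strict positivity of the prefactor $Sb^2\sqrt{\tau}/(2\tilde\sigma\sqrt{2\pi})$ --- are exactly the computations the paper leaves implicit in the phrase ``from which the proposition follows directly.''
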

\begin{proof}
Substituting the Black-Scholes values for the derivatives of option prices, and making the change of variable $z = \frac{m}{\tilde\sigma\sqrt{\tau}}+\frac{\tilde\sigma\sqrt{\tau}}{2}$, the maximization problem to solve becomes:
\begin{align}
\max_{z_1,z_2} bS \frac{n(z_1)n(z_2)}{n(z_1)+n(z_2)}\left\{\frac{b\sqrt{\tau}}{2\tilde\sigma}(z_1^2 - z_2^2) - \frac{b\tau}{2}(z_1-z_2)-\rho\frac{\sigma}{\tilde\sigma}(z_1-z_2)\right\},\label{bsoptim}
\end{align}
from which the proposition follows directly.
\end{proof}
\begin{remark}
A numerical study of the function $f$ in the above proposition shows that it admits a unique maximizer in all cases except when $w=0$. 
\end{remark}
\begin{remark}
From the form of the functional \eqref{bsoptim} it is clear that one should choose options with the largest time to expiry $\tau$ for which liquid options are available. 
\end{remark}
\begin{remark}
The gamma term (second derivative with respect to the stock price) does not appear in the expression \eqref{bsoptim}, because the optimization is done under the constraint of vega-neutrality, and in the Black-Scholes model a portfolio of options with the same maturity is vega-neutral if and only if it is gamma neutral (see the formulas for the greeks earlier in this section). Therefore in this setting, to arbitrage the misspecification of the volatility itself, as opposed to skew or convexity, one needs a portfolio of options with more than one maturity.  
\end{remark}

The variable $z=\frac{\log(S/K)}{\sigma\sqrt{\tau}}+\frac{\sigma\sqrt{\tau}}{2}$ introduced in the proof is directly related to the Delta of the option: for a call option $\Delta = N(z)$. This convenient parameterization corresponds to the current practice in Foreign Exchange markets of expressing the strike of an option via its delta. Given a weighting measure $\omega_t(dK)$ as introduced in section \ref{setting.sec}, we denote by $\bar\omega_t(dz)$ the corresponding measure in the $z$-space. We say that a portfolio of options is $\Delta$-symmetric (resp., $\Delta$-antisymmetric) if $\bar \omega_t$ is symmetric (resp., antisymmetric). 
 
The following result clarifies the role of butterflies and risk reversals for volatility arbitrage. 
\begin{propo}
Let $\mathcal P^{opt}$ be the instantaneous arbitrage profit \eqref{arbprofit} realized by the optimal strategy of Proposition \ref{bsopt.prop}
\begin{enumerate}
\item Consider a portfolio (RR) described as follows:
\begin{itemize}
\item If $b\tau/2+\rho \sigma/\tilde\sigma\geq 0$
\begin{itemize}
\item buy $\frac{1}{2}$ units of options with log-moneyness $m_1 = -\tilde\sigma \sqrt{\tau} -\frac{\tilde\sigma^2 \tau}{2}$, or, equivalently, delta value $N(-1)\approx 0.16$ 
\item sell $\frac{1}{2}$ units of options with log-moneyness  $m_2 = \tilde\sigma \sqrt{\tau} -\frac{\tilde\sigma^2 \tau}{2}$, or, equivalently, delta value $N(1)\approx 0.84$. 
\end{itemize}
\item if $b\tau/2+\rho\sigma/\tilde\sigma < 0$ buy the portfolio with weights of the opposite sign.
\end{itemize} 
Then the portfolio (RR) is the solution of the maximization problem \eqref{arbprofit} under the additional constraint that it is $\Delta$-antisymmetric.   
\item Consider a portfolio (BF) described as follows 
\begin{itemize}
\item buy $x$ units of options with log-moneyness $m_1 = z_0 \tilde\sigma \sqrt{\tau} - \tilde\sigma^2 \tau$, or, equivalently, delta value $N(z_0)\approx 0.055$, where $z_0\approx 1.6$ is a universal constant, solution to 
$$
\frac{z_0^2}{2} e^{\frac{z_0^2}{2}} = e^{\frac{z_0^2}{2}} +1
$$
\item buy $x_0$ units of options with log-moneyness $m_2 = -z_0 \tilde\sigma \sqrt{\tau} - \tilde\sigma^2 \tau$, or, equivalently, delta value $N(-z_0)\approx 0.945$
\item sell $1-2x_0$ units of options with log-moneyness $m_3 = -\frac{\tilde\sigma^2 \tau}{2}$ or, equivalently, delta value $N(0) = \frac{1}{2}$, where 
the quantity $x_0$ is chosen to make the portfolio vega-neutral, that is, $x_0 = \frac{1}{2(1+e^{-z_0^2/2})}\approx 0.39$. 
\end{itemize}
Then, the portfolio (BF) is the solution of the maximization problem \eqref{arbprofit} under the additional constraint that it is $\Delta$-symmetric.
\item Define by $\mathcal P^{RR}$ the instantaneous arbitrage profit realized by the portfolio of part 1 and by $\mathcal P^{BF}$ that of part 2. Let 
$$
\alpha = \frac{|\tilde \sigma b\tau+2\sigma\rho|}{|\tilde \sigma b\tau+2\sigma\rho| + 2b K_0 \sqrt{\tau}}
$$
where $K_0$ is a universal constant, defined below in the proof, and approximately equal to $0.459$. Then
$$
\mathcal P^{RR}\geq \alpha \mathcal P^{opt} \quad \text{and} \quad  \mathcal P^{BF}\geq (1-\alpha) \mathcal P^{opt}.
$$
\end{enumerate}
\end{propo}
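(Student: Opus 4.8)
The plan is to rewrite the whole problem in the delta-type variable $z=\frac{m}{\tilde\sigma\sqrt\tau}+\frac{\tilde\sigma\sqrt\tau}{2}$ used in the proof of Proposition \ref{bsopt.prop}, in which the option vega is proportional to the even weight $n(z)$ and the strike measure $\omega$ becomes a measure $\bar\omega$ in $z$-space with the same total variation. Substituting the Black-Scholes greeks into \eqref{diffop} with $\tilde b\equiv 0$, and using that vega-neutrality forces every term proportional to $n(z)$ (in particular the gamma term and the constant part) to integrate to zero, I expect the instantaneous profit of \emph{any} vega-neutral portfolio to collapse to
\begin{equation*}
\mathcal{P}[\bar\omega]=C\int n(z)\,(z^2-wz)\,\bar\omega(dz),\qquad C=\frac{S b^2\sqrt\tau}{2\tilde\sigma}>0,
\end{equation*}
with $w$ as in Proposition \ref{bsopt.prop}; evaluated on a vega-weighted two-point spread this reproduces the functional $f$ of that proposition. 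The whole argument then rests on the even/odd decoupling of this functional against the even weight $n(z)$: a $\Delta$-symmetric $\bar\omega$ sees only the even term $z^2$ (but must still be constrained to be vega-neutral), while a $\Delta$-antisymmetric $\bar\omega$ sees only the odd term $-wz$ and is automatically vega-neutral since $\int n\,d\bar\omega=0$. For part 1 I would write an antisymmetric $\bar\omega$ with $\|\bar\omega\|_V=1$ as $\bar\omega^+-\bar\omega^{+,R}$, where $\bar\omega^+\ge0$ has mass $\tfrac12$ and $\bar\omega^{+,R}(A):=\bar\omega^+(-A)$, so that $\mathcal{P}[\bar\omega]=2\int h\,d\bar\omega^+$ with the odd function $h(z)=-Cw\,zn(z)$; since $zn(z)$ attains its extrema at $z=\pm1$, concentrating $\bar\omega^+$ at the maximizer of $h$ (namely $z=-1$ if $w>0$ and $z=+1$ if $w<0$) gives the long leg at $z=-1$ with delta $N(-1)\approx0.16$ and the short leg at $z=+1$ with delta $N(1)\approx0.84$, i.e. exactly portfolio (RR); a one-line computation then gives $\mathcal{P}^{RR}=C|w|\,n(1)$.

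For part 2, restricted to $\Delta$-symmetric vega-neutral portfolios the profit is $C\int n(z)z^2\,\bar\omega(dz)$, and I would establish global optimality of (BF) by a verification (LP-duality) argument rather than by optimizing inside a parametric family. Put $\lambda^\ast=z_0^2-2$ and $h_{\lambda^\ast}(z)=n(z)(z^2-\lambda^\ast)$; a one-variable analysis of $h_{\lambda^\ast}'$ shows a local minimum at $z=0$ and global maxima at $z=\pm z_0$, and requiring the peaks to balance, $h_{\lambda^\ast}(z_0)=-h_{\lambda^\ast}(0)$, is exactly the stated equation $\tfrac{z_0^2}{2}e^{z_0^2/2}=e^{z_0^2/2}+1$. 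Then for any symmetric vega-neutral $\bar\omega$ with $\|\bar\omega\|_V=1$, adding $-\lambda^\ast\int n\,d\bar\omega=0$ yields $\int nz^2\,d\bar\omega=\int h_{\lambda^\ast}\,d\bar\omega\le\|h_{\lambda^\ast}\|_\infty=2n(z_0)$, and this upper bound is attained by the three-strike portfolio placing mass at $\pm z_0$ (where $h_{\lambda^\ast}$ is maximal) and at $0$ (where it is minimal). Imposing $\|\bar\omega\|_V=1$ and vega-neutrality on that portfolio fixes $x_0=\frac{1}{2(1+e^{-z_0^2/2})}$, identifies it with (BF), and gives $\mathcal{P}^{BF}=2C\,n(z_0)=2C\,n(1)\,K_0$ with $K_0:=n(z_0)/n(1)=e^{(1-z_0^2)/2}\approx0.459$, which is the constant referred to in the statement.

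For part 3 I would first note, from the two formulas above, that $\mathcal{P}^{RR}+\mathcal{P}^{BF}=C\,n(1)(|w|+2K_0)$, so that after clearing the factor $b\sqrt\tau$ (using $|w|\,b\sqrt\tau=|\tilde\sigma b\tau+2\sigma\rho|$) the constant $\alpha$ of the statement is precisely $\alpha=\mathcal{P}^{RR}/(\mathcal{P}^{RR}+\mathcal{P}^{BF})$. It therefore suffices to prove the single inequality $\mathcal{P}^{opt}\le\mathcal{P}^{RR}+\mathcal{P}^{BF}$, since multiplying it by the nonnegative numbers $\alpha$ and $1-\alpha$ gives $\alpha\mathcal{P}^{opt}\le\mathcal{P}^{RR}$ and $(1-\alpha)\mathcal{P}^{opt}\le\mathcal{P}^{BF}$. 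For this I would take an arbitrary vega-neutral $\bar\omega$ with $\|\bar\omega\|_V=1$ and split it into symmetric and antisymmetric parts $\bar\omega=\tfrac12(\bar\omega+\bar\omega^{R})+\tfrac12(\bar\omega-\bar\omega^{R})=:\bar\omega^{s}+\bar\omega^{a}$, where $\bar\omega^{R}(A):=\bar\omega(-A)$; since $n$ is even, both are vega-neutral and each has total variation $\le1$, and by linearity $\mathcal{P}[\bar\omega]=\mathcal{P}[\bar\omega^{s}]+\mathcal{P}[\bar\omega^{a}]$. Parts 1--2 then give $\mathcal{P}[\bar\omega^{s}]\le\|\bar\omega^{s}\|_V\,\mathcal{P}^{BF}\le\mathcal{P}^{BF}$ and $\mathcal{P}[\bar\omega^{a}]\le\|\bar\omega^{a}\|_V\,\mathcal{P}^{RR}\le\mathcal{P}^{RR}$ (using $\mathcal{P}^{BF},\mathcal{P}^{RR}\ge0$), and taking the supremum over $\bar\omega$ yields $\mathcal{P}^{opt}\le\mathcal{P}^{RR}+\mathcal{P}^{BF}$.

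I expect the main obstacle to be part 2: proving that the three-strike butterfly is optimal among \emph{all} $\Delta$-symmetric vega-neutral measures, not merely within a chosen parametric family. This is where the verification bound $\int nz^2\,d\bar\omega\le\|h_{\lambda^\ast}\|_\infty$ and the exact choice $\lambda^\ast=z_0^2-2$ must be engineered to dovetail with the defining equation for $z_0$. The remaining work is careful constant bookkeeping, above all checking that $K_0=n(z_0)/n(1)$ is precisely what makes $\alpha$ equal the ratio $\mathcal{P}^{RR}/(\mathcal{P}^{RR}+\mathcal{P}^{BF})$, so that the single inequality $\mathcal{P}^{opt}\le\mathcal{P}^{RR}+\mathcal{P}^{BF}$ delivers both claimed bounds simultaneously.
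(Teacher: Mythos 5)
Your proposal is correct, and its skeleton --- passing to the delta variable $z$, reducing the profit of any vega-neutral portfolio to $C\int n(z)(z^2-wz)\,\bar\omega(dz)$, and exploiting the even/odd decoupling against the even weight $n(z)$ --- is exactly the paper's; part 1 in particular coincides with the paper's argument (point mass at the maximizer of $zn(z)$, reflected antisymmetrically, and your parameterization $\bar\omega=\bar\omega^+-\bar\omega^{+,R}$ with $\bar\omega^+\geq 0$ is fully general by the Jordan decomposition). Where you genuinely diverge is part 2: the paper reduces the $\Delta$-symmetric problem to a measure on the half-line $[0,\infty)$ with mass $\tfrac12$ and then invokes the two-point reduction of Proposition \ref{genopt.prop}, solving the resulting two-variable problem $\arg\max\,(z_2^2-z_1^2)/(e^{z_1^2/2}+e^{z_2^2/2})$ to find $z_1=0$, $z_2=z_0$; you instead produce a dual certificate, the multiplier $\lambda^*=z_0^2-2$, and obtain $\int nz^2\,d\bar\omega=\int h_{\lambda^*}\,d\bar\omega\leq\|h_{\lambda^*}\|_\infty=2n(z_0)$, with the defining equation of $z_0$ emerging as the peak-balancing condition $h_{\lambda^*}(z_0)=-h_{\lambda^*}(0)$. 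Your route buys something the paper leaves implicit: the sup-norm bound never uses symmetry, so it shows (BF) maximizes the convexity term over \emph{all} admissible portfolios, which is precisely what part 3 needs; the paper's one-line claim there ("(BF) maximizes the first term in \eqref{bsfunc} and (RR) the second," hence max of sum $\leq$ sum of maxima) tacitly requires a symmetrization step, which your explicit decomposition $\bar\omega=\bar\omega^s+\bar\omega^a$ (each part vega-neutral, each of total variation at most one) supplies rigorously. Conversely, the paper's route avoids guessing a multiplier and reuses machinery already established. Your bookkeeping is also consistent with the paper's: $\mathcal P^{RR}=C|w|n(1)$ and $\mathcal P^{BF}=2Cn(z_0)$ reproduce the paper's displayed values, $K_0=n(z_0)/n(1)=e^{(1-z_0^2)/2}$, and $\alpha=\mathcal P^{RR}/(\mathcal P^{RR}+\mathcal P^{BF})$, so the single inequality $\mathcal P^{opt}\leq\mathcal P^{RR}+\mathcal P^{BF}$ indeed yields both claimed bounds.
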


\begin{remark}
In case of a risk reversal, the long/short decision depends on the sign of $\rho + \frac{\tilde \sigma b \tau}{2\sigma}$ rather than on the sign of the correlation itself. This means that when the implied volatility surface is flat and $\rho<0$, if $b$ is big enough, contrary to conventional wisdom, the strategy of selling downside strikes and buying upside strikes will yield a positive profit, whereas the opposite strategy will incur a loss. This is because the risk reversal (RR) has a positive vomma (second derivative with respect to Black-Scholes volatility). Let $P^x$ be the price of the risk reversal contract where one buys a call with delta value $x<\frac{1}{2}$ and sells a call with delta value $1-x$. Then
$$
\frac{\partial^2 P^x}{\partial \sigma^2} = S n(d) \tilde \sigma \tau d,\quad d = N(1-x)>0.
$$
Therefore, from equation \eqref{diffop} it is clear that if $b$ is big enough, the effect of the misspecification of $b$ will outweigh that of the correlation leading to a positive instantaneous profit. 
\end{remark}

\begin{proof}
Substituting the Black-Scholes values for the derivatives of option prices, and making the usual change of variable, the maximization problem \eqref{arbprofit} becomes:
\begin{align}
&\max \frac{Sb^2\sqrt{\tau}}{2\tilde\sigma}\int z^2 n(z) \bar\omega_t(dz) - Sb(b\tau/2+\rho \sigma/\tilde\sigma)\int zn(z)\bar\omega_t(dz)\label{bsfunc}\\
&\text{subject to}\quad \int n(z)\bar\omega_t(dz)=0,\quad \int |\bar\omega_t(dz)| = 1. \notag
\end{align}
\begin{enumerate}
\item In the antisymmetric case it is sufficient to look for a measure $\bar \omega^+_t$ on $(0,\infty)$ solution to
\begin{align*}
&\max  - Sb(b\tau/2+\rho \sigma/\tilde\sigma)\int_{(0,\infty)} zn(z)\bar\omega^+_t(dz)\\
&\text{subject to}\quad \int_{(0,\infty)} |\bar\omega_t(dz)| = \frac{1}{2}. \notag
\end{align*}
The solution to this problem is given by a point mass at the point
$z = \arg\max zn(z) = 1$ with weight $-\frac{1}{2}$ if $b\tau/2+\rho \sigma/\tilde\sigma\geq 0$ and  $\frac{1}{2}$ if $b\tau/2+\rho \sigma/\tilde\sigma< 0$. Adding the antisymmetric component of $\bar\omega_t^+$, the proof of part 1 is complete.

\item 
In the symmetric case it is sufficient to look for a measure $\bar \omega^+_t$ on $[0,\infty)$ solution to
\begin{align*}
&\max \frac{Sb^2\sqrt{\tau}}{2\tilde\sigma}\int_{[0,\infty)} z^2 n(z) \bar\omega^+_t(dz) \\
&\text{subject to}\quad \int_{[0,\infty)} n(z)\bar\omega^+_t(dz)=0,\quad \int_{[0,\infty)} |\bar\omega^+_t(dz)| = \frac{1}{2}. \notag
\end{align*}
By the same argument as in the proof of Proposition \ref{genopt.prop}, we get that the solution to this problem is given by two point masses at the points $z_1$ and $z_2$ given by
$$
(z_1,z_2) = \arg\max \frac{n(z_1)z_2^2 n(z_2)-n(z_2)z_1^2 n(z_1)}{n(z_1)+n(z_2)} = \arg\max \frac{z_2^2 -z_1^2 }{e^{\frac{z_1^2}{2}}+e^{\frac{z_2^2}{2}}},
$$
from which we immediately see that $z_1=0$ and $z_2$ coincides with the constant $z_0$ introduced in the statement of the proposition. Adding the symmetric part of the measure $\bar \omega_t^+$, the proof of part 2 is complete. 

\item To show the last part, observe that the contract (BF) maximizes the first term in \eqref{bsfunc} while the contract (RR) maximizes the second term. The values of \eqref{bsfunc} for the contract (BF) and (RR) are given by
$$
\mathcal P^{BF} = \frac{Sb^2 \sqrt{\tau}}{\tilde\sigma \sqrt{2\pi}}e^{-z_0^2/2},\qquad \mathcal P^{RR} = \frac{Sb|b\tau/2+\rho \sigma/\tilde\sigma|}{\sqrt{2\pi}}e^{-\frac{1}{2}}.
$$
and therefore
$$
\frac{\mathcal P^{RR}}{\mathcal P^{BF}+\mathcal P^{RR}} = \frac{|\tilde \sigma b\tau+2\sigma\rho|}{|\tilde \sigma b\tau+2\sigma\rho| + 2b K_0 \sqrt{\tau}} \quad \text{with}\quad K_0 = e^{\frac{1}{2}-\frac{z_0^2}{2}}.
$$
Since the maximum of a sum is always no greater than the sum of maxima, $\mathcal P^{opt} \leq \mathcal P^{BF}+\mathcal P^{RR}$ and the proof is complete. 
\end{enumerate}
\end{proof}

\begin{remark}
Let us sum up our findings about the role of butterflies and risk reversals for arbitrage:
\begin{itemize}
\item Risk reversals are not optimal and butterflies are not optimal unless $\rho = -\frac{b\tilde \sigma\tau}{2\sigma}$ (because $\alpha=0$ for this value of $\rho$).
\item Nevertheless there exists a universal risk reversal (16-delta risk reversal in the language of foreign exchange markets, denoted by RR) and a universal butterfly (5.5-delta vega weighted buttefly, denoted by BF) such that one can secure at least half of the optimal profit by investing into one of these two contracts. Moreover, for each of these contracts, a precise estimate of the deviation from optimality is available. 
\item Contrary to the optimal portfolio which depends on model parameters via the constant $w$ introduced in Proposition \ref{bsopt.prop}, the contracts (BF) and (RR) are model independent and only the choice of the contract (BF or RR) and the sign of the risk reversal may depend on model parameters. This means that these contracts are to some extent robust to the arbitrageur using a wrong parameter, which is important since in practice the volatility of volatility and the correlation are difficult to estimate. 

\item The instantaneous profit $\mathcal P^{BF}$ of the contract (BF) does not depend on the true value of the correlation $\rho$. This means that (BF) is a \emph{pure convexity trade}, and it is clear from the above that it has the highest instantaneous  profit among all such trades. Therefore, the contract (BF) should be used if the sign of the correlation is unknown. 
\item When $b\rightarrow 0, \alpha\rightarrow 1$, and in this case (RR) is the optimal strategy.
\end{itemize}
\end{remark}

\section{Simulation study in the SABR model}
\label{sabr.sec}
In this section, we consider the case when, under the pricing probability, the underlying asset price follows the stochastic volatility model known as SABR model. This model captures both the volatility of volatility and the correlation effects and is analytically tractable. The dynamics of the underlying asset under $\mathbb Q$ is
\begin{align}
dS_t&=\tilde \sigma_t S_t^{\beta}(\sqrt{1-\tilde \rho^2}d\tilde W^1_t + \tilde \rho d\tilde W^2_t)\label{eq:sabrsde1}\\
d\tilde\sigma_t&=\tilde b \tilde \sigma_t d\tilde W^2_t\label{eq:sabrsde2}
\end{align}
To further simplify the treatment, we take $\beta=1$, and in order to guarantee that $S$ is a martingale under the pricing probability \cite{sin.98}, we assume that the correlation coefficient satisfies $\tilde \rho\leq 0$. The true dynamics of the instantaneous implied volatility are 
\begin{align}
d\tilde\sigma_t=b \tilde \sigma_t d W^2_t,\label{eq:sabrsde3}
\end{align}
and the dynamics of the underlying under the real-world measure are
\begin{align}
dS_t= \sigma_t S_t (\sqrt{1- \rho^2}dW^1_t + \rho dW^2_t).\label{eq:sabrsde4}
\end{align}

The SABR model does not satisfy some of the assumptions of section \ref{setting.sec}, in particular because the volatility is not bounded from below by a positive constant. Nevertheless, it provides a simple and tractable framework to illustrate the performance of our strategies. A nice feature of the SABR model is that rather precise approximate pricing formulas for vanilla options are available, see \cite{sabr} and \cite{osajima.07}, which can be used to compute the optimal strikes of Proposition \ref{genopt.prop}. Alternatively, one can directly compute the first order correction (with respect to the volatility of volatility parameter $\tilde b$) to the Black-Scholes optimal values using perturbation analysis. We now briefly outline the main ideas behind this approach.

\paragraph{First order correction to option price} 
In the SABR model, the call/put option price $P$ satisifies the following pricing equation,
\[
\frac{\partial P}{\partial t}+\frac{S^{2}\tilde\sigma^{2}}{2}\frac{\partial^{2}P}{\partial S^{2}}+\frac{\tilde b^2\tilde \sigma^{2}}{2}\frac{\partial^{2}P}{\partial\sigma^{2}}+S\tilde\sigma^2 \tilde b\tilde\rho\frac{\partial^{2}P}{\partial S\partial\sigma}=0\]
with the appropriate terminal condition. We assume that $\tilde b$ is small and look for approximate solutions of the form $P=P_{0}+\tilde b P_{1}+O(\tilde b^2)$. The zero-order term $P_{0}$ corresponds to the Black Scholes solution:
\[
\frac{\partial P_{0}}{\partial t}+\frac{S^{2}\tilde\sigma^{2}}{2}\frac{\partial^{2}P_{0}}{\partial S^{2}}=0\]
The first order term satisfies the following equation:
\[
\frac{\partial P_{1}}{\partial t}+\frac{S^{2}\tilde\sigma^{2}}{2}\frac{\partial^{2}P_{1}}{\partial S^{2}}+S\tilde \sigma^{2}\tilde \rho\frac{\partial^{2}P_{0}}{\partial S\partial\sigma}=0\]
and can be computed explicitly via
$$
P_{1}=\frac{\tilde\sigma^2 \tilde\rho(T-t)}{2}S\frac{\partial^{2}P_{0}}{\partial S\partial\sigma}
$$

\paragraph{First order correction to optimal strikes} Define
\[
f=({\cal L}-\tilde{{\cal L}})P\]
where we recall that
\[
({\cal L}-\tilde{{\cal L}})P=\frac{S^2 (\sigma^2-\tilde\sigma^2)}{2}\frac{\partial^2 P}{\partial S^2} + \frac{(b^{2}-\tilde{b}^{2})\tilde\sigma^2}{2}\frac{\partial^{2}P}{\partial\sigma^{2}}+S(\sigma\tilde \sigma b\rho-\tilde\sigma^2\tilde{b}\tilde{\rho})\frac{\partial P^{2}}{\partial S\partial\sigma}\]
For the general case, we are not assuming anything on the smallness of the parameters of the true model. 
Now to the first order in $\tilde b$, we can expand $f$ as,
\[
f\approx f_{0}+\tilde b f_{1}\]
where,
\[
f_{0}=\frac{S^2 (\sigma^2-\tilde\sigma^2)}{2}\frac{\partial^2 P_0}{\partial S^2} + \frac{b^{2}\tilde\sigma^2}{2}\frac{\partial^{2}P_{0}}{\partial\sigma^{2}}+S\sigma\tilde\sigma b\rho\frac{\partial P_{0}^{2}}{\partial S\partial\sigma}\]
and,
\[
f_{1}=\frac{S^2 (\sigma^2-\tilde\sigma^2)}{2}\frac{\partial^2 P_1}{\partial S^2} +\frac{b^{2}\tilde\sigma^2}{2}\frac{\partial^{2}P_{1}}{\partial\sigma^{2}}+S\sigma\tilde\sigma b\rho\frac{\partial^{2}P_{1}^{2}}{\partial S\partial\sigma}-S\tilde\sigma^2 \tilde{\rho} \frac{\partial^{2}P_{0}^{2}}{\partial S\partial\sigma}\]

Define the expression to be maximized from Proposition \ref{genopt.prop}:
\[
F=\frac{w^{2}f^{1}-w^{1}f^{2}}{w^{1}+w^{2}}\]
where $w^{i}=\frac{\partial P^{i}}{\partial\sigma}$. These vegas are also expanded in the same manner: $w^i \approx w^i_0 + \tilde b w^i_1$. 
Now to first order in $\tilde b$, we can expand $F$ as,
\begin{align}
F\approx \tilde F :=F_{0}+\tilde b(F_{1}+F_{3}F_{0})\label{ftilde}
\end{align}
where
\begin{align*}
F_{0}&=\frac{w_{0}^{2}f_{0}^{1}-w_{0}^{1}f_{0}^{2}}{w_{0}^{1}+w_{0}^{2}}\\
F_{1}&=\frac{w_{0}^{2}f_{1}^{1}+w_{1}^{2}f_{0}^{1}-w_{0}^{1}f_{1}^{2}-w_{1}^{1}f_{0}^{2}}{w_{0}^{1}+w_{0}^{2}}\\
\text{and}\quad F_{3}&=\frac{w_{1}^{1}+w_{1}^{2}}{w_{0}^{1}+w_{0}^{2}}.
\end{align*}

We would like to represent the optimal strike $K^*$ which maximizes $\tilde F$ as the sum of the strike $K_0$ which maximizes $F_0$  plus a small correction of order of $\tilde b$:
$$
K^*\approx K_0 + \tilde b K_1.
$$  
The optimal strike $K^*$ is the solution of the equation,
\[
\frac{\partial \tilde F}{\partial K}(K^*)=0 \quad \iff \quad \frac{\partial \tilde F}{\partial K}(K_0)+\tilde b K_1 \frac{\partial^{2}\tilde F}{\partial K^{2}}(K_0) \approx 0.\]
Using the representation \eqref{ftilde}, the fact that $K_0$ is the maximizer of $F_0$ and keeping only the terms of order up to one in $\tilde b$, we get
\[
\frac{\partial F_{1}}{\partial K}(K_0)+F_{0}\frac{\partial F_{3}}{\partial K}(K_0)+K_1\frac{\partial^{2}F_{0}}{\partial K^{2}}(K_0)=0\]
which gives the following correction to the optimal strike:
\[
K_1=-\frac{\frac{\partial F_{1}}{\partial K}(K_0)+F_{0}\frac{\partial F_{3}}{\partial K}(K_0)}{\frac{\partial^{2}F_{0}}{\partial K^{2}}(K_0)}\]

Figure \ref{pertfig} compares this first-order correction to the two optimal strikes to the zero-order Black-Scholes value and the precise value computed using Hagan's implied volatility asymptotics \cite{sabr}, for different values of the volatility of volatility parameter $\tilde b$. Given that already the difference between the zero-order correction and the precise value is not that big, our first order correction appears acceptable for reasonably small values of $\tilde b$. 

\begin{figure} 
 \centering
 \begin{tabular}{cc}
  \epsfig{file=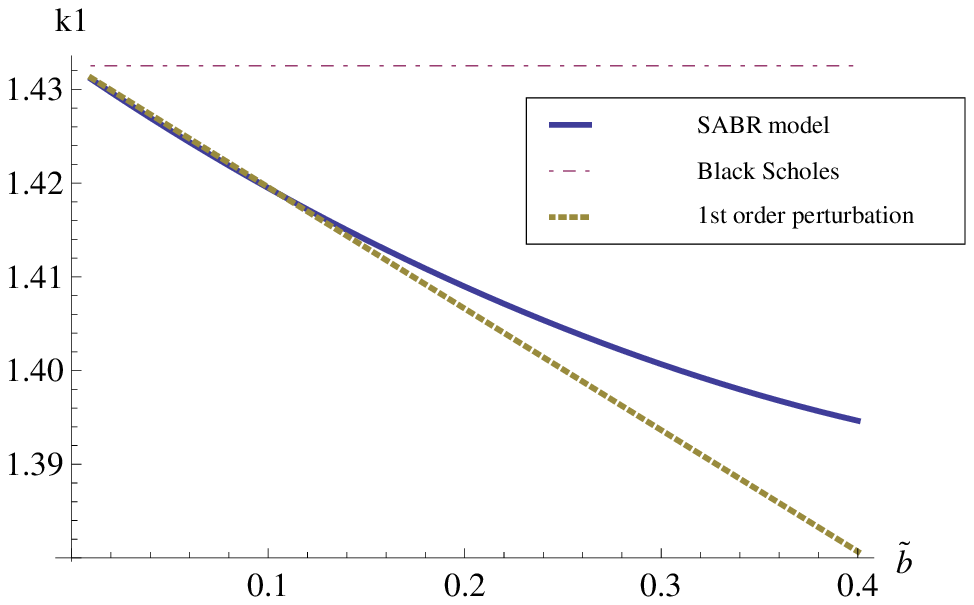,scale=.7} &
\hspace{-20pt}
  \epsfig{file=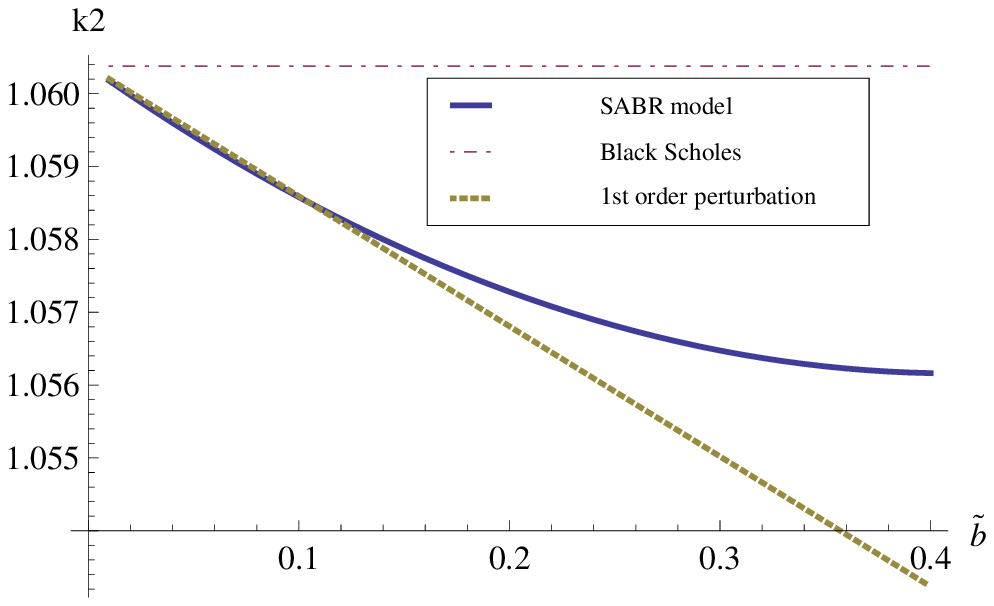,scale=.7} 
 \end{tabular}
\caption{Optimal Strikes for the set of parameters $\sigma=.2$, $S=1, b=.3, \rho=-.3, \tilde{\rho}=-.5, t=1$, as a function of the misspecified $\tilde{b} \in [.01,.4]$. }
\label{pertfig}
\end{figure}

\paragraph{Numerical results}

In this section we demonstrate the performance of our strategies on a simulated example. We assume that a trader, who is aware both about the true and the misspecified parameters, invests in our strategy. We simulate 10000 runs\footnote{For the figure which includes transaction costs, 100000 simulation runs are done.} of the stock and the volatility during 1 year under the system specified by \eqref{eq:sabrsde1}--\eqref{eq:sabrsde4}, assuming misspecification of the correlation $\rho$ and the volatility of volatility $b$ but not of the volatility itself $(\sigma=\tilde\sigma)$.  The initial stock value is $S_0=100$ and initial volatility is $\sigma_0=.1$. The interest rate is assumed to be zero. During the simulation, at each rebalancing date, the option portfolio is completely liquidated, and a new portfolio of options with the same, fixed, time to maturity and desired optimal strikes is purchased. 

We illustrate the impact of different characteristics of the strategy on the overall performance. In each of the  graphs, we plot two groups of curves: the cross-marked ones correspond to misspecified model, and the diamond-market ones correspond to the setting when the model parameters are equal to the true ones. In each group, the three curves correspond (from lower to upper) to the 25\% quantile, the median and the 75\% quantile computed from the 10000/100000 simulated trajectories. 
\begin{itemize}
\item Impact of correlation misspecification. Figure \ref{corrvol.fig} (left) shows the performance of portfolios using options with 1 month to maturity. The true parameters are $\rho=-.8, b=.2$. The misspecified or the market parameters are $\tilde{\rho}=-.2, \tilde{b}=.2$.
\item Impact of volatility of volatility misspecification. Figure \ref{corrvol.fig} (right) shows the performance of portfolios using options with 1 month to maturity. The true parameters are $\rho=-.8, b=.2$. The misspecified or the market parameters are $\tilde{\rho}=-.8, \tilde{b}=.1$.
\item Impact of the options' maturity. Figure \ref{maturity.fig} shows the difference in performance of portfolios using options with 1 month and 6 months to maturity. The true parameters are $\rho=-.8, b=.2$. The misspecified or the market parameters are $\tilde{\rho}=-.2, \tilde{b}=.1$. As predicted by our analysis in the Black-Scholes case, the performance is better for options with longer time to maturity and higher intrinsic value.  
\item Impact of the rebalancing frequency. Figure \ref{frequency.fig} illustrates the effect of the total number of rebalancings on portfolio performance. The true parameters are $\rho=-.8, b=.2$. The misspecified or the market parameters are $\tilde{\rho}=-.2, \tilde{b}=.1$. The average profit is roughly the same for both the rebalancing frequencies but the variance is higher for lower rebalancing frequencies.

\item Impact of trader using different parameters from the true ones. Figure \ref{trader.fig} shows the performance of portfolios using options with 1 month to maturity. The misspecified or the market parameters are $\tilde{\rho}=-.2, \tilde{b}=.1$. The true historical parameters are $\rho=-.8, b=.2$ but the trader estimates these parameters with an error and uses the values $\hat{\rho}=-.6, \hat{b}=.15$ instead. The average profit increases with better estimation of misspecification in parameters.
\item Impact of transaction costs. In all our analysis till now, we have neglected the transaction costs.  In our last test, we include a bid ask-fork of $0.45\%$ in implied volatility terms for every option transaction. More precisely, if we denote by $P^{mid}$ the option price given by the model and by $\sigma^{mid}$ the corresponding implied volatility, the price that our trader must pay to buy an option corresponds to the implied volatility of $\sigma^{mid}+0.225\%$ and the price she receives for selling an option corresponds to the implied volatility of $\sigma^{mid}-0.225\%$. This corresponds roughly to the bid-ask interval observed for the most liquid options on the S\&P 500 index. The true parameters are $\rho=-.9, b=.9$. The misspecified or the market parameters are $\tilde{\rho}=-.1,\tilde{b}=.1$. The evolution of the portfolio performance with 32 rebalancing dates per year using options with 1 month remaining to maturity is plotted in Figure \ref{trans.fig}. With the parameter values that we have chosen, it seems that even in the presence of moderate transaction costs, our strategy produces a positive pay-off in at least $75\%$ of cases. Whether this will be the case in real options markets is a more difficult question, which requires an extensive empirical study. We plan to address this issue in future research.  
\end{itemize}

\begin{figure}
\centerline{\includegraphics[width=0.45\textwidth]{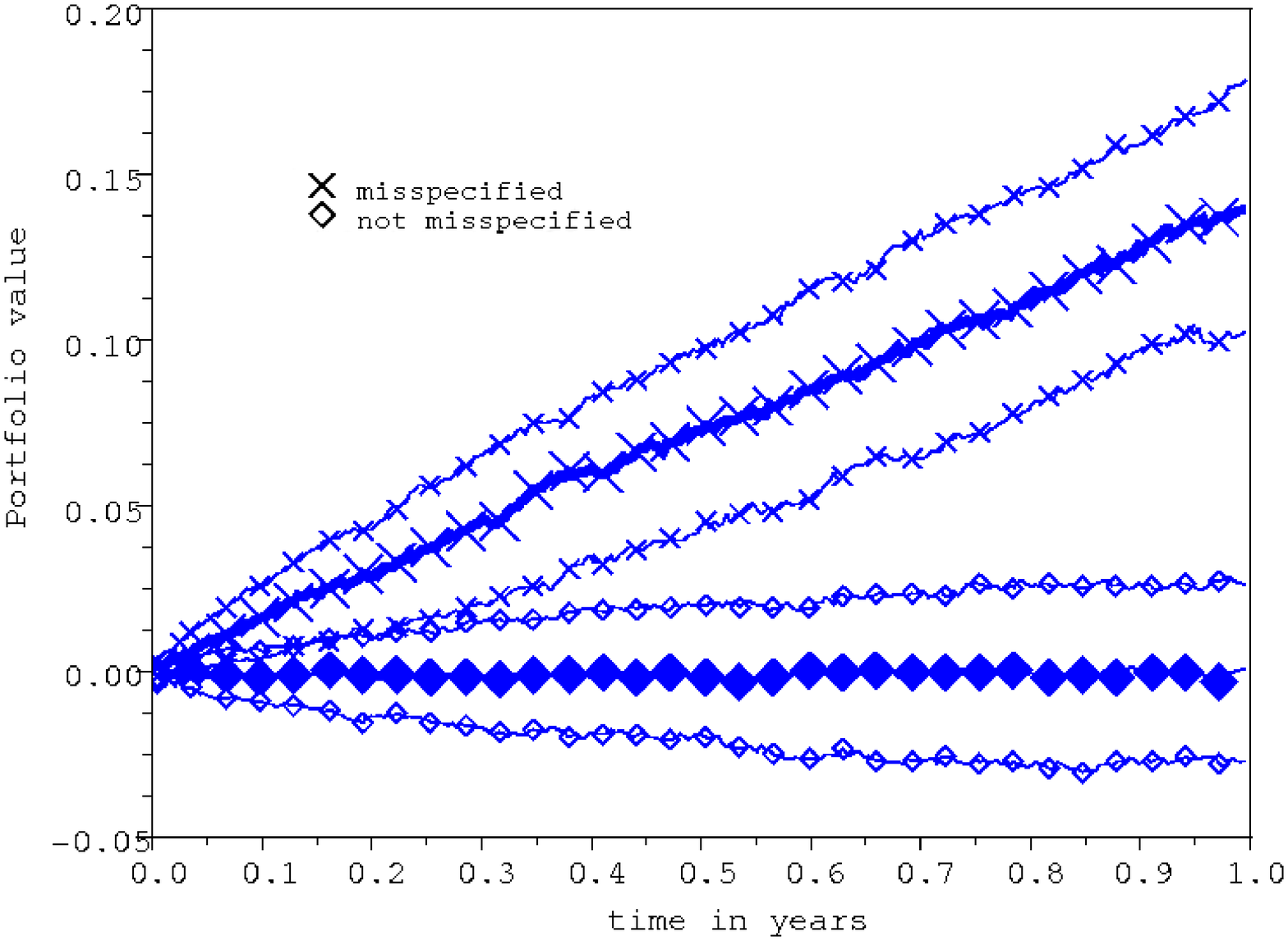}\hspace{-1cm}\includegraphics[width=0.45\textwidth]{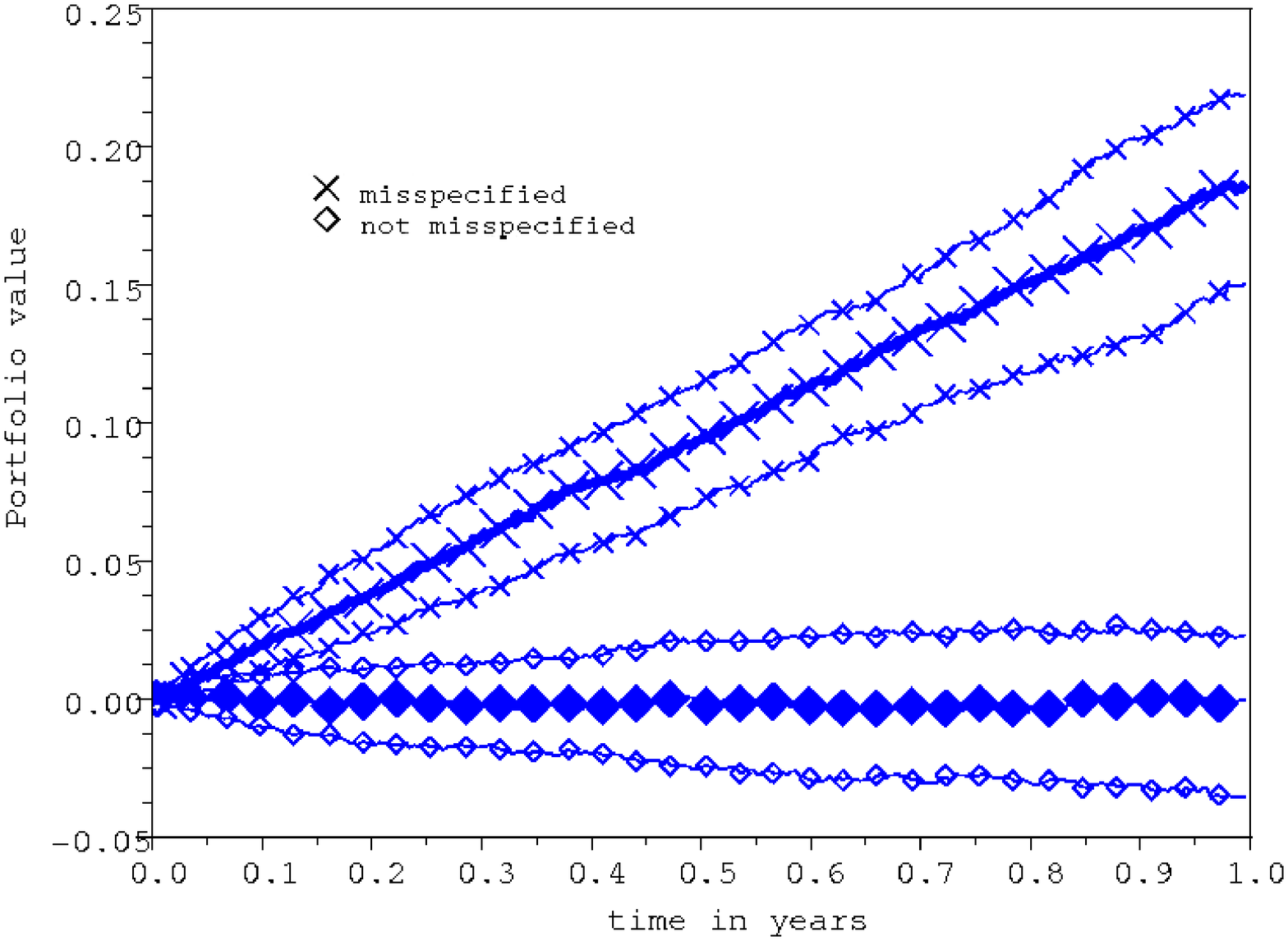}} 
\caption{The evolution of portfolios using options with 1 month to maturity for only correlation misspecification (left) and only volatility of volatility misspecification (right).}
\label{corrvol.fig}
\end{figure}

\begin{figure}
\centerline{\includegraphics[width=0.45\textwidth]{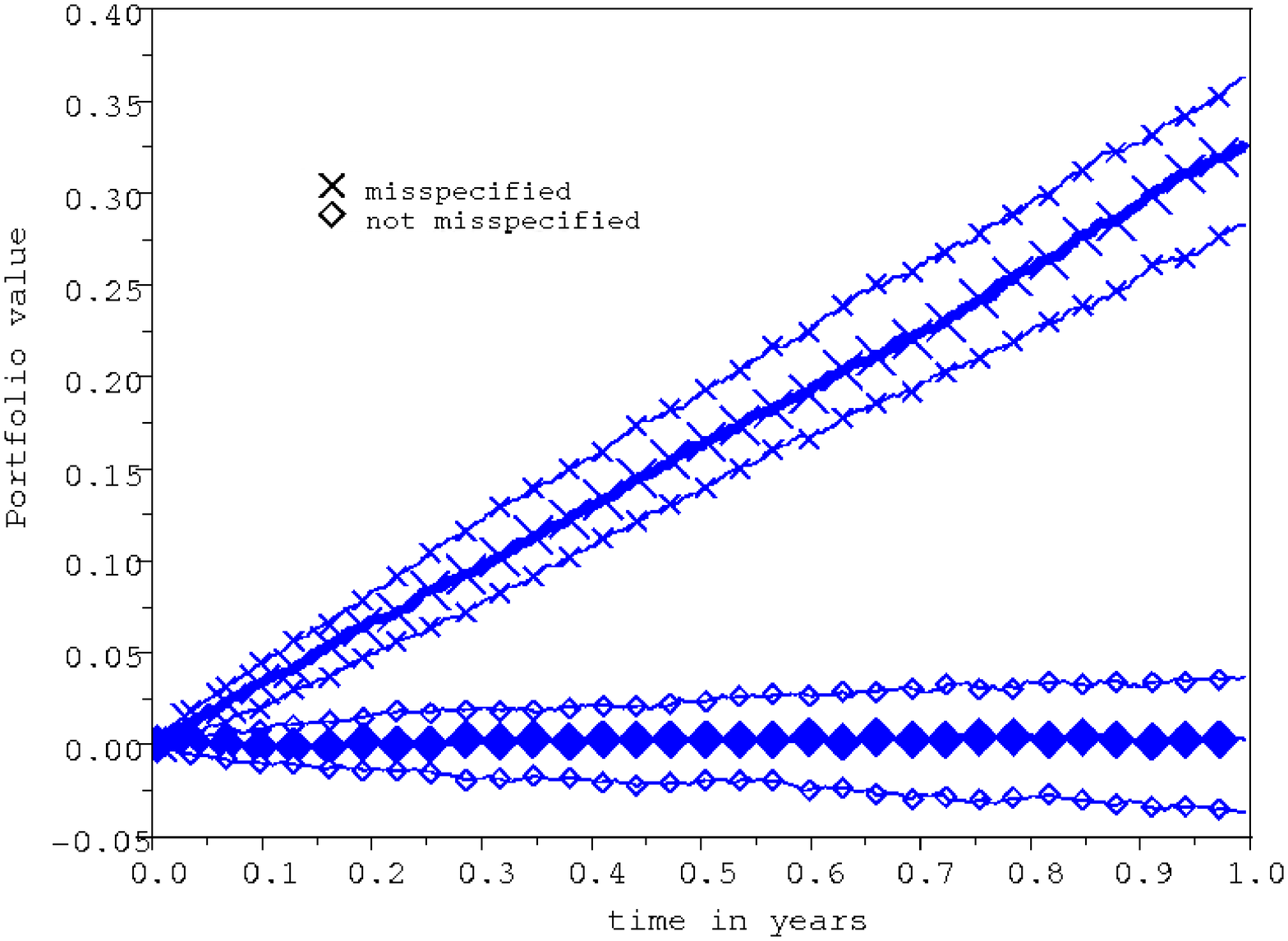}\hspace{-1cm}\includegraphics[width=0.45\textwidth]{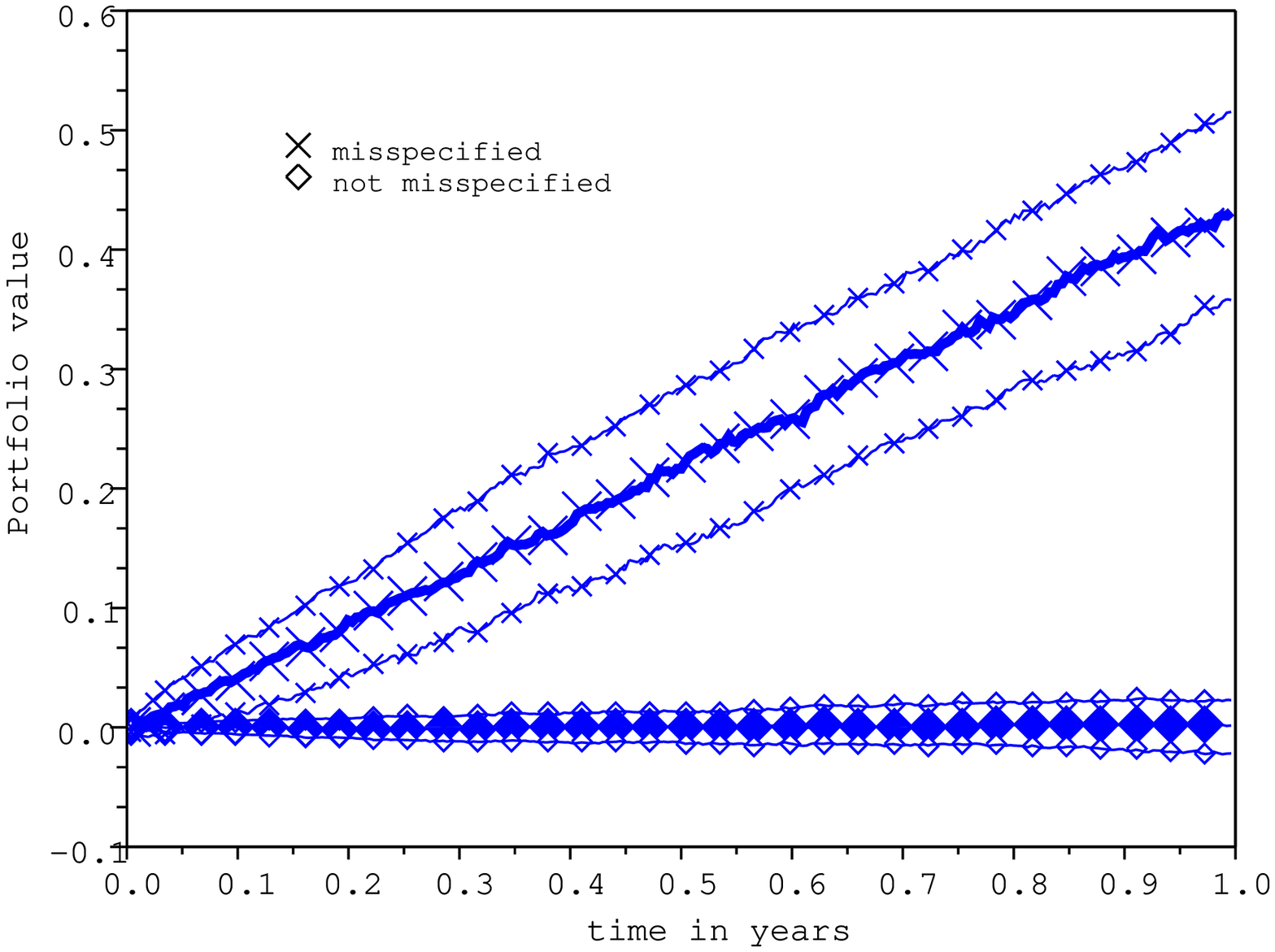}} 
\caption{Impact of the options' maturity: The evolution of portfolios using options with 1 month (left) and 6 months to maturity with $256$ rebalancing dates.}
\label{maturity.fig}
\end{figure}

\begin{figure}
\centerline{\includegraphics[width=0.45\textwidth]{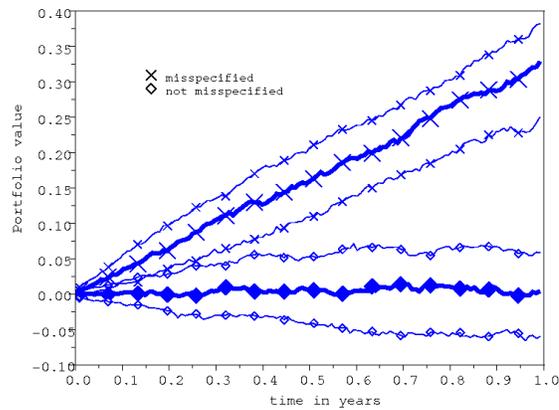}}
\caption{Impact of the rebalancing frequency: Evolution of the portfolio using options with 1 month to maturity and 128 rebalancing dates. Compare with Figure \ref{maturity.fig}, left graph. }
\label{frequency.fig}
\end{figure}

\begin{figure}
\centerline{\includegraphics[width=0.45\textwidth]{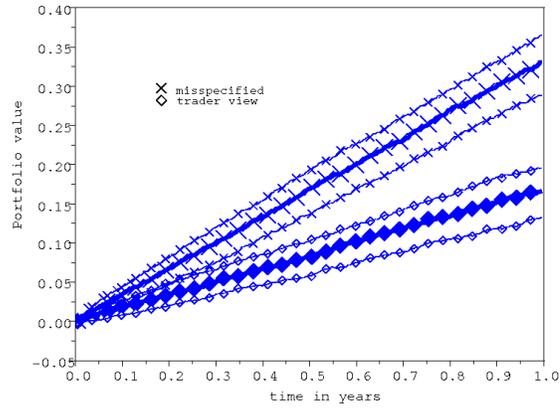}}
\caption{Impact of using a different set of parameters: Evolution of the portfolio for a different set of parameters than the true ones for options with 1 month to maturity and 256 rebalancing dates.}
\label{trader.fig}
\end{figure}

\begin{figure}
\centerline{\includegraphics[width=0.45\textwidth]{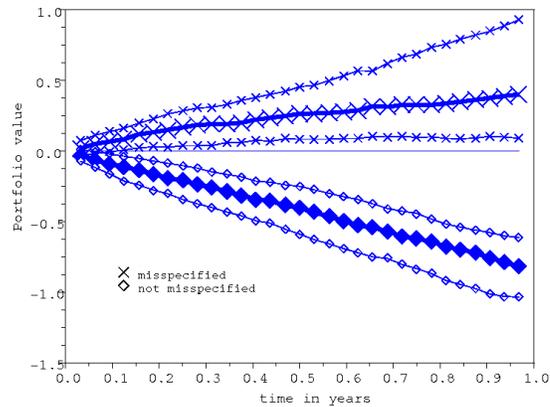}}
\caption{The effect of transaction costs on the performance of the arbitrage strategy.}
\label{trans.fig}
\end{figure}

\section*{Acknowledgements}

We would like to thank the participants of the joint seminar of Soci\'et\'e G\'en\'erale, Ecole Polytechnique and Ecole des Ponts et Chauss\'ees, and especially L. Bergomi (Soci\'et\'e G\'en\'erale) for insightful comments on the previous version of this paper. 

This research of Peter Tankov is supported by the Chair Financial Risks of the Risk Foundation sponsored by Société Générale, the Chair Derivatives of the Future sponsored by the Fédération Bancaire
Française, and the Chair Finance and Sustainable Development sponsored by EDF and Calyon. This research of Rudra Jena is supported by the Chair Financial Risks of the Risk Foundation sponsored by Société Générale.

\appendix
\section{Proof of Lemma \ref{limits.lm}}
First, let us briefly recall some results on fundamental solutions of parabolic PDE \cite{ladyzhenskaya.al.68}.
Let \[
L\left(x,t,\frac{\partial}{\partial x},\frac{\partial}{\partial t}\right)u:=\frac{\partial}{\partial t} + \frac{1}{2}\sum_{i,j=1}^{n}a_{ij}(x,t)\frac{\partial^{2}u}{\partial x_{i}\partial x_{j}}+\sum_{i=1}^{n}b_{i}(x,t)\frac{\partial u}{\partial x_{i}} + c(x,t).\] 
\begin{assu}\label{as1.ass}
 There is a positive constant $\mu$ such that
\[
\sum_{i,j=1}^n a_{ij}(x,t)\xi_{i}\xi_{j}\geqslant\mu|\xi|^{2}\quad \forall (x,t)\in \mathbb R^{n}\times[0,T],\:\xi\in \mathbb R^{n}.
\]
\end{assu}
\begin{assu}\label{as2.ass}
There exists $\alpha\in (0,1)$ such that the coefficients of $L$ are bounded and Hölder continuious in $x$ with exponent $\alpha$ and Hölder continuous in $t$ with exponent $\frac{\alpha}{2}$, uniformly
with respect to $(x,t)$ in $\mathbb R^{n}\times[0,T]$.
\end{assu}

The \emph{fundamental solution} of the parabolic second-order equation with operator $L$ is the function $\Gamma(x,t,\xi,T)$ which satisfies
$$
L\left(x,t,\frac{\partial}{\partial x},\frac{\partial}{\partial t}\right) \Gamma(x,t,\xi,T) = \delta(x-\xi)\delta(t-T),\quad t\leq T. 
$$
Under \ref{as1.ass} and \ref{as2.ass}, the operator $L$ admits a fundamental solution $\Gamma$ with
\begin{align}
|D^r_t D^s_x \Gamma(x,t,\xi,T)| \leq c(T-t)^{-\frac{n+2r+s}{2}}\exp\left(-C\frac{|x-\xi|}{T-t}\right)\label{fundsolbound}
\end{align}
where $r$ and $s$ are integers with $2r+s\leq 2$, $t<T$ and $c,C$ are positive. 

Consider now the Cauchy problem,
\begin{align*}
Lu(x,t)&=f(x,t),\quad (x,t)\in \mathbb R^{n} \times [0,T),    \\
u(x,T) &=\phi(x),\quad x\in \mathbb R^{n},
\end{align*}
where $f$ is Hölder continuous in its arguments, $\phi$ is continuous and these function satisfy reasonable growth constraints at infinity. Then the solution to this problem can be written as
$$
u(x,t) = \int_t^T d\tau \int_{\mathbb R^n} d\xi \Gamma(x,t,\xi,\tau) f(\xi,\tau) + \int_{\mathbb R^n} d\xi \Gamma(x,t,\xi,T)\phi(\xi).
$$

Let us now turn to the proof of Lemma \ref{limits.lm}. We discuss the results for the put options. The result for calls follows directly by put-call parity. 
As described in Section \ref{setting.sec}, the put option price $P(S,y,t)$ solves the PDE \eqref{pde}. Let $x = \log\frac{S}{K}$ and $p(x,y,t):=P(Ke^x,y,t)$. Then $p$ solves the PDE
\begin{align}
\frac{\partial p}{\partial t} + \mathcal A p = 0,\qquad p(x,y,T)= & K(1-e^x)^+.
\label{xpde}
\end{align}
where 
\begin{align*} 
\mathcal{A}p= & \frac{1}{2}\sigma^2 \left\{\frac{\partial^2p}{\partial x^2} - \frac{\partial p}{\partial x}\right\}+a\frac{\partial p}{\partial y}+\frac{1}{2}b^2\frac{\partial^2p}{\partial y^2}+\rho b\sigma \frac{\partial^2p}{\partial x \partial y} \label{mainop}
\end{align*} 
The quantities $a,b,\sigma,\rho$ correspond to market misspecified values, but in this appendix we shall omit the tilda and often drop the explicit dependence on $t$ and $y$ in model parameters to simplify notation. 

Therefore, the option price can be written as 
\begin{align}
p(x,y,t) = \int dz \int dv\, \Gamma(x,y,t,z,v,T)K(1-e^z)^+,
\end{align}
where $\Gamma(x,y,t,z,v,T)$ is the fundamental solution of \eqref{xpde}. Since the coefficients of $\mathcal A$ do not depend on $x$, $\Gamma(x,y,t,z,v,T) \equiv  \Gamma(0,y,t,z-x,v,T)$. Coming back to the original variable $S$, we have
\begin{align}
P(S,y,t) = \int dz \, \gamma(y,t,z-\log\frac{S}{K},T)K(1-e^z)^+\label{fundsolput}
\end{align}
where we write
$$
\gamma(y,t,z,T):=\int dv\, \Gamma(0,y,t,z,v,T).
$$

\paragraph{Decay of the gamma $\frac{\partial^2 P}{\partial S^2}$}
By direct differentiation of \eqref{fundsolput}, we get
\begin{align}
\frac{\partial^2 P}{\partial S^2} = \frac{K}{S^2} \gamma(y,t,\log \frac{K}{S},T)\label{gammarep}
\end{align}
and it follows from \eqref{fundsolbound} that
\begin{align}
\left|\frac{\partial^2 P}{\partial S^2}\right| \leq \frac{C K}{S^2|T-t|^{\frac{1}{2}}}\exp\left(-\frac{c\log^2 \frac{K}{S}}{T-t}\right).\label{gammabound} 
\end{align}
This proves the required decay properties, and the continuity of $\frac{\partial^2 P}{\partial S^2}$ also follows from \eqref{fundsolbound}. 

\paragraph{Decay of the vega $\frac{\partial P}{\partial y}$} We denote $U(S,y,t):=\frac{\partial P (S,y,t)}{\partial y}$ and $u(x,y,t):=\frac{\partial p (x,y,t)}{\partial y}$. Using the regularity of coefficients and local regularity results for solutions of parabolic PDE \cite[Corollary 2.4.1]{krylov.08}, we conclude that the derivatives $\frac{\partial^3 p}{\partial x^2 \partial y}$, $\frac{\partial^3 p}{\partial x \partial y^2}$ and $\frac{\partial^3 p}{\partial y^3}$ exist, and therefore the operator \eqref{mainop} may be differentiated term by term with respect to $y$, producing
\begin{align}\label{vegaeqapp}
\frac{\partial u}{\partial t}+\mathcal{A}_{1}u=-\sigma \sigma'  \left\{\frac{\partial^2 p}{\partial x^2} - \frac{\partial p}{\partial x}\right\},
\end{align}
where
\begin{align*}
 \mathcal{A}_{1}= & \mathcal{A}+(\rho b \sigma)' \frac{\partial }{\partial x}+ bb^{'}\frac{\partial }{\partial y} + a'.
\end{align*}
All the primes denote the derivative w.r.t. $y$ and the terminal condition is $u(S,y,T)\equiv 0$, since the original terminal condition is independent of $y$.

The right-hand side of \eqref{vegaeqapp} satisfies
\begin{align}
\frac{\partial^2 p}{\partial x^2} - \frac{\partial p}{\partial x} = K\gamma(y,t,-x,T)\label{2derrep}
\end{align}
so from \eqref{fundsolbound} we get
\begin{align}
\left|\frac{\partial^2 p}{\partial x^2}-\frac{\partial p}{\partial x}\right| \leq \frac{C K}{(T-t)^{\frac{1}{2}}}\exp\left(-\frac{c x^2}{T-t}\right).\label{xgammabound} 
\end{align}

Let $\Gamma_1$ denote the fundamental solution of the parabolic equation with the operator appearing in the left-hand side of \eqref{vegaeqapp}. Using the estimates of the fundamental solution in \cite[section 4.13]{ladyzhenskaya.al.68} (in particular, the Hölder continuity) and the bound \eqref{xgammabound}, we can show that the solution to \eqref{vegaeqapp} is given by
$$
u(x,y,t) = \int_t^T dr \int_{\mathbb R^2} dz\,dv\, \Gamma_1 (x,y,t,z,v,r) \sigma(v)\sigma'(v)   \left\{\frac{\partial p}{\partial z} - \frac{\partial^2 p}{\partial z^2}\right\}(z,v,r).
$$
Using the boundedness of $\sigma$ and $\sigma'$, the bound on the fundamental solution and \eqref{xgammabound}, and integrating out the variable $v$, we get
$$
|u(x,y,t)| \leq \int_t^T dr \int dz  \frac{C K}{(T-r)^{\frac{1}{2}}(r-t)^{\frac{1}{2}}}\exp\left(-\frac{cz^2}{T-r}-\frac{c(x-z)^2}{r-t}\right) .
$$
Explicit evaluation of this integral then yields the bound
\begin{align}
|u(x,y,t)|&\leq C K \sqrt{T-t} e^{-\frac{cx^2}{T-t}}\label{vegabound} \\
|U(S,y,t)|&\leq C K \sqrt{T-t} e^{-\frac{c\log^2 \frac{S}{K}}{T-t}}
\end{align} 
from which the desired decay properties follow directly.

\paragraph{Decay of $\frac{\partial^2 P}{\partial S\partial y}$ } We denote $w(x,y,t) = \frac{\partial u}{\partial x}$ and $W(S,y,t)=\frac{\partial U}{\partial S} = \frac{w}{S}$ and differentiate equation \eqref{vegaeqapp} with respect to $x$:
\begin{align*}
\frac{\partial w}{\partial t}+ \mathcal{A}_{1}w = -\sigma \sigma'  \left\{\frac{\partial^3 p}{\partial x^3} - \frac{\partial^2 p}{\partial x^3}\right\}.
\end{align*}
From \eqref{2derrep} and \eqref{fundsolbound},
\begin{align}
\left|\frac{\partial^3 p}{\partial x^3}-\frac{\partial^2 p}{\partial x^2}\right| \leq \frac{C K}{T-t}\exp\left(-\frac{c x^2}{T-t}\right). 
\end{align}
Similarly to the previous step we now get:
$$
|w(x,y,t)| \leq \int_t^T dr \int dz  \frac{C K}{(T-r)(r-t)^{\frac{1}{2}}}\exp\left(-\frac{cz^2}{T-r}-\frac{c(x-z)^2}{r-t}\right)
$$
and explicit evaluation of this integral yields the bounds
\begin{align}
|w(x,y,t)|&\leq C K e^{-\frac{cx^2}{T-t}},\qquad |W(S,y,t)|\leq C \frac{K}{S} e^{-\frac{c\log^2 \frac{S}{K}}{T-t}}\label{wbound} 
\end{align}

\paragraph{Decay of $\frac{\partial^2 P}{\partial y^2}$} We denote $v(x,y,t)=\frac{\partial u}{\partial y}$, $V(S,y,t) = \frac{\partial U}{\partial y}$ and differentiate equation \eqref{vegaeqapp} with respect to $y$:
\begin{align*}
\frac{\partial v}{\partial t}+ \mathcal{A}_{2}v=- \sigma\sigma'  \frac{\partial}{\partial y}\left\{\frac{\partial^2 p}{\partial x^2}-\frac{\partial p}{\partial x}\right\}-a''u-(\rho b \sigma )^{''}w
\end{align*}
where
\begin{align*}
 \mathcal{A}_2= & \mathcal{A}_1+bb^{'}\frac{\partial}{\partial y}+(\rho b \sigma)^{'}\frac{\partial}{\partial x} + a' + (bb')'.
\end{align*}
Once again, from \eqref{2derrep} and \eqref{fundsolbound},
\begin{align*}
\left|\frac{\partial}{\partial y}\left\{\frac{\partial^2 p}{\partial x^2}-\frac{\partial p}{\partial x}\right\}\right| \leq \frac{C K}{T-t}\exp\left(-\frac{c x^2}{T-t}\right).
\end{align*}
Using this bound together with \eqref{wbound} and \eqref{vegabound} and proceding as in previous steps, we complete the proof.


\begin{thebibliography}{10}

\bibitem{awy01}
{\sc Y.~A\"it-Sahalia, Y.~Wang, and F.~Yared}, {\em Do option markets correctly
  price the probabilities of movement of the underlying asset}, J.
  Econometrics, 102 (2001), pp.~67--110.

\bibitem{avriel.reisman.00}
{\sc M.~Avriel and H.~Reisman}, {\em {Optimal option portfolios in markets with
  position limits and margin requirements}}, J. Risk, 2 (2000), pp.~57--67.

\bibitem{bakshi}
{\sc G.~Bakshi, C.~Cao, and Z.~Chen}, {\em Empirical performance of alternative
  option pricing models}, J. Finance, 52 (1997).

\bibitem{bates}
{\sc D.~Bates}, {\em Jumps and stochastic volatility: the exchange rate
  processes implicit in {D}eutschemark options}, Rev. Fin. Studies, 9 (1996),
  pp.~69--107.

\bibitem{bates00}
{\sc D.~Bates}, {\em Post-87 crash fears in the {S}\&{P} 500 futures option
  market}, J. Econometrics, 94 (2000), pp.~181--238.

\bibitem{bertsimas.kogan.lo.00}
{\sc D.~Bertsimas, L.~Kogan, and A.~W. Lo}, {\em When is time continuous}, J.
  Financ. Econ., 55 (2000), pp.~173--204.

\bibitem{bjork97}
{\sc T.~Bj\"ork, G.~Di~Masi, Y.~Kabanov, and W.~Runggaldier}, {\em Towards a
  general theory of bond markets}, Finance Stoch., 1 (1997), pp.~141--174.

\bibitem{cetin.al.06}
{\sc U.~Cetin, R.~Jarrow, P.~Protter, and M.~Warachka}, {\em {Pricing options
  in an extended Black Scholes economy with illiquidity: Theory and empirical
  evidence}}, Rev. Fin. Studies, 19 (2006), pp.~493--519.

\bibitem{cetin.soner.touzi.10}
{\sc U.~Cetin, H.~Soner, and N.~Touzi}, {\em {Option hedging for small
  investors under liquidity costs}}, Finance Stoch., 14 (2010), pp.~317--341.

\bibitem{vixwhitepaper}
{\sc Chicago~Board~of~{Options}~{Exchange}}, {\em The {CBOE} {V}olatility {I}ndex --  {VIX}}, CBOE, 2009.
\newblock Download from: \texttt{http://www.cboe.com/micro/VIX/vixwhite.pdf}.

\bibitem{cboe.margin}
\leavevmode\vrule height 2pt depth -1.6pt width 23pt, {\em {CBOE} {M}argin {M}anual}, CBOE, 2000.
\newblock Download from: \texttt{www.cboe.com/LearnCenter/pdf/margin2-00.pdf}.

\bibitem{dumas}
{\sc B.~Dumas, J.~Fleming, and R.~Whaley}, {\em Implied volatility functions:
  Empirical tests}, J. Finance, 53 (1998), pp.~2059--2106.

\bibitem{robustbs}
{\sc N.~El~Karoui, M.~Jeanblanc, and S.~Shreve}, {\em Robustness of the {B}lack
  {S}choles formula}, Math. Finance, 8 (1998), pp.~93--126.

\bibitem{fernholz.karatzas.08}
{\sc D.~Fernholz and I.~Karatzas}, {\em On optimal arbitrage}.
\newblock preprint, 2008.

\bibitem{friedman.75}
{\sc A.~Friedman}, {\em Stochastic Differential Equations and Applications},
  vol.~1, Academic Press, 1975.

\bibitem{sabr}
{\sc P.~Hagan, D.~Kumar, A.~Lesniewski, and D.~Woodward}, {\em Managing smile
  risk}, Wilmott Magazine, Sep. (2002).

\bibitem{aqf02}
{\sc W.~H\"ardle, T.~Kleinow, and G.~Stahl}, {\em Applied Quantitative
  Finance}, Springer-Verlag, 2002.

\bibitem{javaheri.05}
{\sc A.~Javaheri}, {\em Inside Volatility Arbitrage}, Wiley Finance, 2005.

\bibitem{kabanov.safarian.97}
{\sc Y.~Kabanov and M.~Safarian}, {\em On {L}eland's strategy of option pricing
  with transaction costs.}, Finance Stoch., 1 (1997), pp.~239--250.

\bibitem{krylov.08}
{\sc N.~V. Krylov}, {\em Lectures on Elliptic and Parabolic Equations in
  Sobolev Spaces}, American Mathematical Society, 2008.

\bibitem{ladyzhenskaya.al.68}
{\sc O.~A. Ladyzhenskaya, V.~A. Solonnikov, and N.~N. Uraltseva}, {\em Linear
  and Quasilinear Equations of Parabolic Type}, American Mathematical Society,
  1968.

\bibitem{leland.85}
{\sc H.~E. Leland}, {\em Option pricing and replication with transactions
  costs.}, J. Finance, 40 (1985), pp.~1283--1301.

\bibitem{osajima.07}
{\sc Y.~Osajima}, {\em General asymptotics of {W}iener functionals and
  application to mathematical finance}.
\newblock Available from \texttt{http://ssrn.com/abstract=1019587}, 2007.

\bibitem{rr3}
{\sc C.~Robert and M.~Rosenbaum}, {\em {On the Microstructural Hedging Error}},
  SIAM J. Fin. Math., 1 (2010), pp.~427--453.

\bibitem{romagnoli.vargiolu.00}
{\sc S.~Romagnoli and T.~Vargiolu}, {\em Robusness of the {B}lack-{S}choles
  approach in the case of options on several assets}, Finance Stoch., 4 (2000),
  pp.~325--341.

\bibitem{romano.touzi}
{\sc M.~Romano and N.~Touzi}, {\em Contingent claims and market completeness in
  a stochastic volatility model}, Math. Finance, 7 (1997), pp.~399--410.

\bibitem{santa-clara.saretto.09}
{\sc P.~Santa-Clara and A.~Saretto}, {\em {Option strategies: Good deals and
  margin calls}}, J. Fin. Markets, 12 (2009), pp.~391--417.

\bibitem{sin.98}
{\sc C.~A. Sin}, {\em Complications with stochastic volatility models}, Adv.
  Appl. Probab., 30 (1998).

\end{thebibliography}
\end{document}